\newtheorem{theorem}{Theorem}[section]
\newtheorem{lemma}[theorem]{Lemma}
\newtheorem{proposition}[theorem]{Proposition}
\theoremstyle{definition}
\newtheorem{definition}[theorem]{Definition}
\theoremstyle{remark}
\numberwithin{equation}{section}
\begin{document}

\title{A Phase Transition in a Quenched Amorphous Ferromagnet}%
\author{Alexei Daletskii}
\address{Department of Mathematics,
University of York,
York YO1 5DD, UK}
\email{alex.daletskii@york.ac.uk}
\author{Yuri Kondratiev}%
\address{Fakult\"at f\"ur Mathematik, Universit\"at Bielefeld, D-33501 Bielefeld}%
\email{kondrat@math.uni-bielefeld.de}%
\author{Yuri Kozitsky}
\address{Instytut Matematyki, Uniwersytet Marii Curie-Sklodowskiej, 20-031 Lublin, Poland}
\email{jkozi@hektor.umcs.lublin.pl}
\author{Tanja Pasurek}
\address{Fakult\"at f\"ur Mathematik, Universit\"at Bielefeld, D-33501 Bielefeld}%
\email{tpasurek@math.uni-bielefeld.de}

\subjclass{82B44; 82B21; 82A57}%
\keywords{Phase transition; random graph;  Poisson point
process; continuum percolation; Wells inequality}%


\maketitle

\begin{abstract}
Quenched thermodynamic states of an amorphous ferromagnet are
studied. The magnet is a countable collection of point particles
chaotically distributed over $\mathbb{R}^d$, $d\geq 2$. Each
particle bears a real-valued spin with symmetric a priori
distribution; the spin-spin interaction is pair-wise and attractive.
Two spins are supposed to interact if they are neighbors in the
graph defined by a homogeneous Poisson point process. For this model,
we prove that with probability one: (a) quenched thermodynamic
states exist; (b) they are multiple if the particle density (i.e.,
the intensity of the underlying point process) and the inverse
temperature are big enough; (c) there exist multiple quenched thermodynamic
states which depend on the realizations of the underlying point process in a measurable way.

\keywords{Random Gibbs measure \and geometric random graph \and Poisson point
process \and percolation \and  unbounded spin model \and
Wells inequality} 
\end{abstract}

\section{Introduction}
\subsection{Setup}

In this paper, we study thermodynamic states of the following system.
 A countable collection of
`particles' is distributed over $\mathbb{R}^d$, $d\geq 2$, in such a
way that every bounded $\Lambda \subset\mathbb{R}^d$ contains only
finite number of them. Each `particle' represents a cluster of
negligible size of magnetically active physical particles. In our
model, this amounts to assuming that a `particle' is characterized
by its location $x\in \mathbb{R}^d$ and spin $\sigma_x\in
\mathbb{R}$. The locations of `particles' constitute a locally finite
set (configuration) $\gamma \subset \mathbb{R}^d$, and the spins  take real
values. We also assume that each $\sigma_x$ is characterized by one
and the same symmetric a priori distribution $\chi$ on $\mathbb{R}$.
The interaction is supposed to be pair-wise and attractive. For the
`particles' located at $x$ and $y$, the interaction energy  is
$\phi(|x-y|) \sigma_x \sigma_y$, with $\phi(r)\geq \phi_* >0$ for
$r\in [0, r_*]$, and $\phi(r) =0$ for $r >r_*$. If $\gamma$
were a crystalline lattice, then the model would be a standard
lattice system of `unbounded spins'. The study of Gibbs states of such
spin systems goes back to the seminal paper \cite{LP}, further continued in
\cite{Bell,Park}. In \cite{KP}, a similar model living on a more general
discrete metric space was studied. The next step was made in \cite{KKP} where
the underlying set was a countable graph with globally unbounded  vertex degrees.
In that paper, a class of graphs was introduced in which vertices of large degree are sparse -- a property
formulated in \cite{KKP} as a weighted summability of the vertex degrees. For such graphs, tempered Gibbs states
of unbounded spin systems were constructed and studied. A natural continuation of those works
would be to pass to random graphs, in which this kind of  summability holds for almost all realizations.
In the present paper we do this step by letting
the underlying set $\gamma$ be random, obeying the Poisson law with
homogeneous density $\lambda >0$. The graph structure on $\gamma$ is then
defined by the spin-spin interaction: $x, y\in \gamma$ are adjacent (neighbors) if $\phi(|x-y|) >0$.
 We call this model the {\it
amorphous ferromagnet}, cf. \cite[Section 11]{OHandley}.

In view of the mentioned randomness, there are two types of
thermodynamic state of our model. In the first case, the randomness is taken into account
already at the level of local states defined on the space of
(marked) configurations $\hat{\gamma} = \{ (x, \sigma_x): x\in
\gamma\}$. The global thermodynamic states constructed in this way
are then {\it annealed states}; they describe the thermal equilibrium of the
whole system. The second approach, which we follow in this paper,
consists in constructing thermodynamic states of the spin system
alone for fixed {\it typical} configurations $\gamma$. These are
{\it quenched states}, cf. \cite{BA}. The global observables characterizing such
states ought to be {\it self-averaging} --  taking the same value
for all typical (i.e., for  almost all) configurations $\gamma$.
Note that studying quenched states is a more difficult problem, as
compared to that of annealed ones, since the present spatial
irregularities do not allow for applying here most of the methods
effective for regular systems. In what follows, we aim at proving:
\begin{itemize}
  \item Existence, for  almost all configurations $\gamma$, of thermodynamic
  states with properties suitable for physical applications.
  \item Measurability of thermodynamic states with respect to $\gamma$.
  \item Multiplicity of such states, for  almost $\gamma$,
  for temperatures  $T< T_*$, where the self-averaging parameter $T_*>0$ may
  depend  on the model parameters $\lambda$, $\phi_*$,  $r_*$, $d$.
\end{itemize}
There are only few publications on the mathematically
rigorous theory of phase transitions in spin systems of general type
living on non-crystalline (amorphous) substances, cf.
\cite{ChayesZ,GZ,GruberG,Gruber,Romano} where annealed states were
considered. The reason for this is presumably that the corresponding methods,
e.g., infrared estimates, are essentially
based on the translation invariance (and other symmetries) of the
underlying crystals. At the same time, for Ising spins $\sigma_x =
\pm 1$, there exist methods applicable to the corresponding models
on graphs, cf. \cite{H,Lyons}. For such models, see also
\cite{Chayes}, the main idea of proving the existence of phase
transitions is to relate the appearance of multiple phases of the
spin system to the Bernoulli bond percolation on the underlying
graph. In our model, we deal with a random graph with vertex set
$\gamma$ and the adjacency relation $x \sim y$ defined by the
property $|x-y|\leq r_*$. That is, the set of edges of the graph is $\varepsilon_\gamma=\{ \{x,y\}
\subset \gamma: x \sim y\}$, and the graph itself -- known as the {\it Gilbert graph} -- is then the pair $(\gamma,
\varepsilon_\gamma)$. It
has various applications and is intensively studied, see, e.g., \cite{Balister,GeL,MR,Penrose}.
The probability distribution of $(\gamma,
\varepsilon_\gamma)$ is described in
subsection \ref{ssec.2.2} below. If this graph has an infinite
connected component, which is a random event with probability
dependent on $\lambda$, under certain conditions one can observe
the Bernoulli bond percolation with a nontrivial percolation
threshold $q_*\in (0,1)$. We combine the mentioned methods and prove
that the mean magnetization in the model can be positive for almost
all configurations $\gamma$, and hence the quenched Gibbs states can
be multiple, if the particle density $\lambda$ and the inverse
temperature $\beta = 1/T$  are large enough\footnote{From now on, the parameters
$r_*$ and $\phi_*$  are fixed and mostly suppressed from
the notations.}.  Finally, let us mention that, for our model with
$\sigma_x \in \mathbb{R}$, the problem of uniqueness of Gibbs states
remains open, see subsection
\ref{2.4ss} below.  We also note
that the method developed in this article can be used to study
annealed states of amorphous ferromagnetic substances where the spin and the particle configurations are in thermal equilibrium.

\subsection{The overview of the results}

In  the sequel, by $\pi_\lambda$ we denote the Poisson measure with
density $\lambda >0$ which describes the probability distribution of
the configurations $\gamma$. In Proposition \ref{1pn} and Theorem \ref{1tm} below, we show
that there exists a set of configurations $A_1$ such that: (a) $\pi_\lambda (A_1) =1$ for all
$\lambda>0$; (b) $\mathcal{G}_{\rm t} (\beta|\gamma) \neq \emptyset$ for all
$\beta >0 $ and all $\gamma\in A_1$. Here $\mathcal{G}_{\rm t}
(\beta|\gamma)$ is the set of {\it tempered} Gibbs measures
of our spin system on $\gamma$  at a given $\beta$.
Tempered Gibbs measures are Gibbs measures supported on
the configurations with tempered growth of $|\sigma_x|$ as $|x| \to
+\infty$. Note that, for spin models on graphs with unbounded vertex
degrees and with single-spin distributions with noncompact support,
there may exist states supported on configurations of spins with
rapidly increasing $|\sigma_x|$, whereas for typical ferromagnetic
configurations in physical substances, most of the spins take values
close to same $s>0$. The proof of Theorem \ref{1tm} is based on
\cite[Theorem 3.1]{KKP} and on the property of $\pi_\lambda$
obtained in Proposition \ref{1pn}, in which $A_1$ appears as the set
of all those configurations for which the quantities in (\ref{9}) are finite, and hence
the graph $(\gamma, \varepsilon_\gamma)$ belongs to the class of sparse graphs introduced in \cite{KKP}.

The Ising model on $\gamma$
is a particular case of our model, which corresponds to the choice
$\chi(d \sigma) = \delta(\sigma^2 - 1) d \sigma$. As is well-known, the set
of the Gibbs measures of this model, $\mathcal{G}^{\rm Ising}_{\rm
t} (\beta|\gamma)$, is nonempty for all $\gamma$. Next, for
\begin{eqnarray*}
A_2 (\beta) & := &\{ \gamma : |\mathcal{G}_{\rm t} (\beta|\gamma)|>1\}, \\[.2cm]
A^{\rm Ising}_2 (\beta) & := &\{ \gamma : |\mathcal{G}^{\rm
Ising}_{\rm t} (\beta|\gamma)|>1\}, \nonumber
\end{eqnarray*}
by the Wells correlation inequality \cite{W}, it follows that, cf.
Proposition \ref{Wpn} below,
\begin{equation}
  \label{A3}
  A_2 (\beta) \supset  A^{\rm Ising}_2 (a^2\beta)
\end{equation}
where $a>0$ is determined by the measure $\chi$, see (\ref{21}). For
the reader convenience, we present here a complete proof of the Wells
inequality, which is a refinement of that in
\cite[Appendix]{Bricmont}. By standard results on the continuum
percolation driven by the Poisson random point process, see
\cite{MR,Penrose} and also \cite[Corollary 3.7]{GH} and
\cite[Theorem 3.1]{GeL}, it follows that, for $\pi_\lambda$-almost
all $\gamma$, the corresponding graph $(\gamma, \varepsilon_\gamma)$
has an infinite connected component whenever $\lambda > \lambda_*$,
where a non-random parameter $\lambda_*>0$ is determined by the
parameter $r_*$ and the dimension of the space $d$. Suppose now that
each edge of this infinite connected component is removed
independently with probability $1-q$ and kept with probability $q$.
If the graph obtained in this way still possesses an infinite connected
component, then one says that the Bernoulli bond percolation with
bond probability $q$ occurs on the infinite connected
component of $(\gamma,\varepsilon_\gamma)$. As in the
previous result, it is possible to show, see Propositions \ref{GHpn}
and \ref{GHpn1} below, that for $\lambda > \lambda_*$ there exists
$q_*\in (0,1)$ such that
\begin{equation}
  \label{A4}
 \pi_\lambda (A_3 (q)) =1, \qquad {\rm for} \ \ q> q_*  \ \ {\rm and} \ \  \lambda > \lambda_*,
\end{equation}
where $A_3 (q)$ is the set of $\gamma$ such that the Bernoulli bond percolation with bond probability
$q$ occurs on the infinite connected component of
$(\gamma,\varepsilon_\gamma)$. By \cite[Theorem 2.1]{H}, we know
that
\begin{equation*}
A^{\rm Ising}_2 (\beta) \supset A_3 (q), \quad {\rm for} \ \ \beta > [\log (1+ q) - \log (1- q)]/2,
\end{equation*}
Then combining (\ref{A3}) and (\ref{A4}), we conclude that, for
$\lambda> \lambda_*$, there exists $\beta_* \geq [\log (1+ q_*) -
\log (1- q_*)]/2$ such that,
for all $\beta > a^2 \beta_*$ and $\pi_\lambda$-almost all $\gamma$, the set
$\mathcal{G}_{\rm t} (\beta|\gamma)$ contains at least two elements, see Theorem \ref{2tm} below.
In principle, we could stop here. However, in that case one important aspect of the theory would have been omitted.
This is the dependence of our tempered Gibbs measures on $\gamma$.
In mathematical theories of random systems \cite{Bov}, Gibbs measures  are supposed to
depend on the random parameters in a measurable way. Then they are called {\it random Gibbs measures}.
 In Section \ref{Quensec}, we study the measurability issue by employing marked configurations $\hat{\gamma}$ consisting of pairs
$(x, \sigma_x)$. In this setting, random Gibbs measures are obtained as  conditional  measures on the space of marked configurations.
In Theorem \ref{etatm}, we show that  random Gibbs measures of our model are multiple if the conditions of Theorem \ref{2tm} are satisfied.

For the sake of clarity, in this paper we restricted
ourselves to the simplest model of amorphous substances -- the
Gilbert graph model based on the Poisson point process. In a similar way, one can prove the statements
mentioned above if the underlying graph is as in  the random
connection model, see \cite{GeL,MR,Penrose} or a tempered Gibbs
random field, see \cite[Corollary 3.7]{GH}. The only condition is that the graph almost surely has the
summability property as in Proposition \ref{1pn}, see \cite{DKKP} for more detail.

\section{Quenched Gibbs measures}
\label{sec:2}

\subsection{The underlying graph}

\label{ssec.2.2}

Let $\mathcal{B} (\mathbb{R}^d)$ and
$\mathcal{B}_{\rm b} (\mathbb{R}^d)$ stand for the set of all
Borel and all bounded Borel subsets of $\mathbb{R}^d$,
respectively.
The space of all configurations is defined as
\begin{equation}
\label{C1}
 \Gamma=\{\gamma\subset\mathbb R^d :
 |\gamma\cap \Lambda|<\infty \quad {\rm for}{\rm \ any} \  {\rm compact} \ \Lambda\subset \mathbb{R}^d
\},
\end{equation}
where $|A|$ stands for the cardinality of $A$. The
space $\Gamma$ is equipped with the vague topology being the weakest
one in which the maps $\Gamma \ni \gamma \mapsto \sum_{x\in \gamma}
f(x)$ are continuous for all continuous functions $f:\mathbb{R}^d
\to \mathbb{R}$ with compact support, see, e.g., \cite{Albev}.
By $\mathcal{B}(\Gamma)$ we denote  the corresponding
Borel $\sigma$-field. The vague topology is metrizable in such a way
that  the corresponding metric space is complete and separable. For
$\lambda >0$, by $\pi_\lambda$ we denote the homogeneous Poisson
measure on $(\Gamma, \mathcal{B}(\Gamma))$ with intensity (density)
$\lambda$. Note that the set of all finite configurations $\Gamma_0$
is a Borel subset of $\Gamma$ such that $\pi_\lambda (\Gamma_0)=0$.
That is, $\pi_\lambda$-almost all configurations $\gamma \in \Gamma$
are infinite.

Each $\gamma$ can be considered as
a graph with vertex set $\gamma$ and the adjacency relation $x \sim
y$ defined by the condition $|x-y|\leq r_*$.  Then
$\varepsilon_\gamma
= \{ \{x,y\} \subset \gamma: x \sim y\}$ is its edge set. The
probability distribution of the random graph
$(\gamma,\varepsilon_\gamma)$ is constructed in the following way,
cf. \cite{GH}. Let $\varepsilon$ denote a set of pairs of distinct
points, i.e., of $e=\{x,y\}$, $x, y \in \mathbb{R}^d$, $x\neq y$. We
say that $\varepsilon$ is locally finite if $\varepsilon_\Lambda$ is
finite for each $\Lambda \in\mathcal{B}_{\rm b}(\mathbb{R}^d)$. Here
$\varepsilon_\Lambda := \{ \{x,y\} \in \varepsilon : \{x,y\} \subset
\Lambda \}$. Let $E$ be the set of all locally finite $\varepsilon$,
and $\mathcal{F}(E)$ be the $\sigma$-field of subsets of $E$
generated by the counting maps $\varepsilon \mapsto
|\varepsilon_\Lambda|$ with all possible $\Lambda\in\mathcal{B}_{\rm
b}(\mathbb{R}^d)$. Note that $E$ is, in fact, the set of locally
finite configurations, cf. (\ref{C1}), over the symmetrization of
the set $\mathbb{R}^d \times \mathbb{R}^d \setminus \{ (x,x): x \in
\mathbb{R}^d\}$. Let $\mathcal{P}(E)$ be the set of all probability
measures on $(E, \mathcal{F}(E))$. Each $\varsigma \in
\mathcal{P}(E)$ can uniquely be determined by its Laplace transform,
which we introduce in the following way. Let  $\theta: \mathbb{R}^d
\times \mathbb{R}^d \to (-1,0]$ be measurable, symmetric, and local,
i.e., $\theta(x,y) = 0$ whenever $x$ or $y$ is in $\Lambda^c:=
\mathbb{R}\setminus \Lambda$ for some $\Lambda \in \mathcal{B}_{\rm
b}(\mathbb{R}^d)$. By $\Theta$ we denote the set of all such
functions. Then for each $\varsigma \in \mathcal{P}(E)$, its Laplace
transform is defined as
\begin{equation*}
L_\varsigma (\theta) = \int_{E} \exp\left[\sum_{\{x,y\}\in \varepsilon} \log (1 + \theta (x,y))  \right]
\varsigma (d \varepsilon), \qquad \theta \in \Theta.
\end{equation*}
Now let $g: \mathbb{R}^d \times \mathbb{R}^d \to [0,1]$ be
measurable and symmetric. For each $\theta \in \Theta$, the
pointwise product $ g \theta$ is also in  $\Theta$. An {\it
independent  $g$-thinning} of a given $\varsigma\in \mathcal{P}(E)$,
cf. \cite[Section 11.2]{DV}, is the measure $\varsigma^g$ defined by
the relation
\begin{equation}
\label{Laplacee}
L_{\varsigma^g} (\theta) = L_\varsigma (g \theta).
\end{equation}
The $g$-thinning of  $\varsigma$ means that each configuration
$\varepsilon$ distributed according to $\varsigma$ is `thinned' in
the sense that each  $\{x,y\}\in \varepsilon$ is removed from the
edge configuration with probability $1-g(x,y)$ and is kept with
probability $g(x,y)$. The probability distribution of such `thinned'
configurations is then $\varsigma^g$.

Given $\gamma \in \Gamma$, the measure $\varsigma(\cdot |\gamma)\in \mathcal{P}(E)$ is defined by
its Laplace transform
\begin{eqnarray}
  \label{Lapl}
L(\theta|\gamma) =    \exp\left[\sum_{x\in \gamma}\sum_{y\in \gamma\setminus x} \log (1 + j(x,y) \theta (x,y))  \right]
\end{eqnarray}
where $j(x,y)= 1$ for $|x-y|\leq r^*$ and $j(x,y)= 0$ otherwise.
For a measurable $\Psi: \mathbb{R}^d \times \mathbb{R}^d \to
\mathbb{R}_{+}:=[0,+\infty)$, the function
\[
\Gamma \ni \gamma \mapsto \sum_{x \in \gamma} \sum_{y \in
\gamma\setminus x} \Psi(x,y) \in \mathbb{R}
\]
is measurable. Therefore, the map $\Gamma \ni \gamma
\mapsto L (\theta|\gamma)$ is also measurable for each $\theta$.
Thus, $\varsigma $ defined in (\ref{Lapl}) is a probability kernel from
$(\Gamma, \mathcal{B}(\Gamma))$ to $(E, \mathcal{F}(E))$, cf.
\cite[Lemma 2.4]{GH}.
The probability distribution of the graph $(\gamma,\varepsilon_\gamma)$ is now defined by the measure
\begin{equation}
\label{Lapl2}
\zeta_{\lambda} (d \gamma , d \varepsilon) =  \varsigma ( d\varepsilon|\gamma)\pi_\lambda (d\gamma).
\end{equation}
It may happen that typical graphs $(\gamma, \varepsilon_\gamma)$
have only finite connected components. Clearly, the probability of
this event, calculated from (\ref{Lapl2}), depends on $\lambda$.  Given $g$ as in
(\ref{Laplacee}), by $\varsigma^g(\cdot|\gamma)$ we denote the
independent $g$-thinning of $\varsigma(\cdot|\gamma)$, and, cf.
(\ref{Lapl2}),
\begin{equation*}
 \zeta^g_{\lambda} (d \gamma , d \varepsilon) := \varsigma^g ( d\varepsilon|\gamma)\pi_\lambda (d\gamma) .
\end{equation*}
The
following fact is known, see \cite[Lemma 2.4 and Corollary
3.7]{GH} and especially \cite[Theorem 3.1]{GeL}.
\begin{proposition}
  \label{GHpnn}
Let $g: \mathbb{R}^d \times \mathbb{R}^d \to
[0,1]$ be measurable, symmetric, and  translation invariant. Then there exists $c(d)>0$ such that, for $\lambda$ satisfying the condidion
\begin{equation}
  \label{Lapl4}
 \lambda \int_{\mathbb{R}^d} g(x,0) d x > c(d),
\end{equation}
$\zeta^g_{\lambda}$-almost all graphs $(\gamma, \varepsilon)$ have infinite connected components.
\end{proposition}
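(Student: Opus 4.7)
The plan is to recognize $\zeta^g_\lambda$ as the distribution of a random connection model and then apply a standard coarse-graining argument for continuum percolation. First, combining (\ref{Lapl}) with the defining relation (\ref{Laplacee}) of the $g$-thinning, the Laplace transform of $\varsigma^g(\cdot|\gamma)$ equals
\[
L_{\varsigma^g}(\theta|\gamma) = \exp\Bigl[\sum_{x \in \gamma}\sum_{y \in \gamma \setminus x}\log\bigl(1 + j(x,y)g(x,y)\theta(x,y)\bigr)\Bigr].
\]
The product form means that, conditionally on $\gamma$, the edges of $\varepsilon$ are mutually independent, with $\{x,y\}$ present with probability $h(x,y) := j(x,y)g(x,y)$. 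Hence $\zeta^g_\lambda$ is the random connection model driven by a Poisson process of intensity $\lambda$ with translation invariant, compactly supported connection function $h$, and the task reduces to proving continuum percolation for this model.

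Next, I would run a renormalization step. Partition $\mathbb{R}^d$ into axis-parallel cubes of fixed side length $L := r_*/(2\sqrt{d})$, chosen so that any two points lying in orthogonally adjacent cubes are within distance $r_*$, and hence satisfy $j=1$. Call a cube $C$ \emph{good} if (i) $C$ contains at least one point of $\gamma$, and (ii) for each of its $2d$ axis-neighbors $C'$, at least one $g$-surviving edge joins a point in $C$ to a point in $C'$. By construction, a chain of nearest-neighbor good cubes yields a connected path in $(\gamma,\varepsilon)$, so an infinite cluster of good cubes produces an infinite connected component of the random graph.

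The goodness of a cube depends only on the Poisson restriction to itself and its neighbors, so the field of good-cube indicators has bounded range of dependence. The Liggett--Schonmann--Stacey stochastic comparison theorem then shows that this field dominates an independent Bernoulli site percolation on $\mathbb{Z}^d$, with parameter tending to $1$ as the single-cube goodness probability tends to $1$. Standard Poisson estimates and independence of the edge retentions give that (i) and (ii) each hold with probability close to $1$ whenever $\lambda\int h(x,0)\,dx$ is sufficiently large; since $h\le g$ and both are supported in the ball of radius $r_*$, this is implied by $\lambda\int g(x,0)\,dx > c(d)$ for a dimensional constant $c(d)$, above which good cubes percolate on $\mathbb{Z}^d$ for $d\ge 2$.

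The main obstacle is to make the probability estimates uniform in the detailed shape of $g$, so that only its integral enters the threshold: the choice of $L$ and the quantitative form of the stochastic domination must be balanced against the critical value of Bernoulli percolation on $\mathbb{Z}^d$ to produce a single constant $c(d)$. This is precisely the content of the quantitative results of \cite{GH,GeL}, whose conclusions I would invoke directly to complete the argument.
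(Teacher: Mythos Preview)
The paper does not actually prove this proposition: it is stated as a known fact, with the proof attributed to \cite[Lemma~2.4 and Corollary~3.7]{GH} and \cite[Theorem~3.1]{GeL}. Your sketch is therefore more detailed than what the paper itself offers, and your closing remark---deferring the quantitative uniform estimate to those same references---ultimately puts you in the same position as the paper.

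Your outline (identify $\zeta^g_\lambda$ as a random connection model with effective connection function $h=jg$; coarse-grain into small cubes; invoke Liggett--Schonmann--Stacey to dominate Bernoulli site percolation) is the standard renormalization route for continuum percolation. Two technical wrinkles are worth flagging. First, you assert that $g$ is supported in the ball of radius $r_*$, but the hypotheses only guarantee this for $h=jg$; in general $\lambda\int g$ large does not force $\lambda\int h$ large. This is really an imprecision in the paper's formulation of (\ref{Lapl4}) rather than in your argument---the applications (Propositions~\ref{GHpn} and~\ref{GHpn1}, in particular the derivation $q_*=c(d)/(\lambda V(r^*))$) show that the operative condition is $\lambda\int_{B_{r_*}} g(x,0)\,dx>c(d)$, i.e.\ the condition on $h$, which is what your renormalization actually controls. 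Second, and as you correctly identify as ``the main obstacle'': with the cube side $L$ fixed in advance by $r_*$ and $d$, the goodness probability depends on the local values of $g$ on pairs in adjacent cubes rather than directly on $\int h$, so producing a threshold that depends only on the integral requires exactly the additional work carried out in \cite{GeL}. Since you already defer to that reference for this step, your sketch is sound as an outline and consistent with how the paper treats the result.
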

Then by Proposition \ref{GHpnn} we obtain the following fact.
\begin{proposition}
  \label{GHpn}
There exists $\lambda_*$ such that, for $\lambda > \lambda_*$, the
graph $(\gamma, \varepsilon_\gamma)$ has an infinite connected
component for $\pi_\lambda$-almost all $\gamma$.
\end{proposition}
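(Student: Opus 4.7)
My plan is to deduce Proposition \ref{GHpn} directly from Proposition \ref{GHpnn} by choosing a thinning function $g$ for which $\zeta^g_\lambda$ reduces to $\zeta_\lambda$. The natural candidate is
\begin{equation*}
g(x,y) := \mathbf{1}_{\{|x-y|\leq r_*\}} = j(x,y),
\end{equation*}
which is measurable, symmetric, translation invariant (depending only on $|x-y|$), and takes values in $\{0,1\}\subset[0,1]$, as required. With $v_d$ denoting the Lebesgue volume of the unit ball in $\mathbb R^d$, one has $\int_{\mathbb R^d} g(x,0)\,dx = v_d\, r_*^d$, so on setting
\begin{equation*}
\lambda_* := c(d)/(v_d\, r_*^d),
\end{equation*}
with $c(d)$ as in Proposition \ref{GHpnn}, condition (\ref{Lapl4}) is satisfied for every $\lambda>\lambda_*$.

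Next I would verify the identification $\zeta^g_\lambda = \zeta_\lambda$. Since a measure in $\mathcal P(E)$ is determined by its Laplace transform, it suffices to show that $\varsigma^g(\cdot|\gamma) = \varsigma(\cdot|\gamma)$ for every $\gamma\in\Gamma$. Using (\ref{Laplacee}), (\ref{Lapl}) and the pointwise identity $j(x,y)g(x,y) = j(x,y)$ (valid because $j$ is $\{0,1\}$-valued),
\begin{equation*}
L_{\varsigma^g}(\theta|\gamma) = L_\varsigma(g\theta|\gamma) = \exp\!\left[\sum_{x\in\gamma}\sum_{y\in\gamma\setminus x}\log\bigl(1+j(x,y)\theta(x,y)\bigr)\right] = L(\theta|\gamma)
\end{equation*}
for every $\theta\in\Theta$, giving $\varsigma^g(\cdot|\gamma) = \varsigma(\cdot|\gamma)$ and, after integration against $\pi_\lambda(d\gamma)$, $\zeta^g_\lambda = \zeta_\lambda$. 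Proposition \ref{GHpnn} then provides, for every $\lambda>\lambda_*$, a $\zeta_\lambda$-full-measure set of pairs $(\gamma,\varepsilon)$ possessing an infinite connected component.

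Finally I would push this conclusion from $\zeta_\lambda$ down to $\pi_\lambda$. Formula (\ref{Lapl}) shows that $\varsigma(\cdot|\gamma)$ is concentrated on the deterministic edge configuration $\varepsilon_\gamma$, so for every jointly measurable $B\subset\Gamma\times E$ one has $\zeta_\lambda(B)=\pi_\lambda\{\gamma:(\gamma,\varepsilon_\gamma)\in B\}$. Taking $B$ to be the set of pairs $(\gamma,\varepsilon)$ for which $(\gamma,\varepsilon)$ has an infinite connected component yields the statement of Proposition \ref{GHpn}. I do not anticipate a serious obstacle; the only non-routine step is the Laplace-transform identification, which rests on the algebraic identity $j^2 = j$.
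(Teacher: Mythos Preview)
Your proposal is correct and follows exactly the paper's approach: take $g=j$ in Proposition~\ref{GHpnn}, so that the integral in (\ref{Lapl4}) equals the volume $V(r_*)$ of the ball of radius $r_*$, and set $\lambda_* = c(d)/V(r_*)$. The paper states this in one line; you have simply spelled out the two verifications it leaves implicit, namely that the $j$-thinning of $\varsigma(\cdot|\gamma)$ is $\varsigma(\cdot|\gamma)$ itself (via $j^2=j$ in the Laplace transform) and that $\varsigma(\cdot|\gamma)=\delta_{\varepsilon_\gamma}$, so that a $\zeta_\lambda$-almost-sure statement about $(\gamma,\varepsilon)$ becomes a $\pi_\lambda$-almost-sure statement about $(\gamma,\varepsilon_\gamma)$.
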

Indeed, by taking in (\ref{Lapl4})  $g(x,y) = j(x,y)$, we prove Proposition \ref{GHpn} with
\begin{equation}
  \label{Lapl5}
 \lambda_* = c(d)/ V(r^*),
\end{equation}
where $V(r^*)$ is the volume of the ball $\{x\in \mathbb{R}^d: |x| \leq r^*\}$.
Note that there can only be a single infinite
connected component, see \cite[Theorem 6.3, page 172]{MR}.

For a constant function $q(x,y) \equiv q\in [0,1]$, let us consider the independent
$q$-thinning of $\varsigma(\cdot|\gamma)$. If the corresponding
random graph has an infinite connected component, then the Bernoulli
bond percolation with bond probability $q$ occurs on the infinite
connected component of $(\gamma, \varepsilon_\gamma)$.  The next
fact can also be deduced from Proposition \ref{GHpnn}.
\begin{proposition}
  \label{GHpn1}
Let $\lambda_*$ be as in Proposition \ref{GHpn} and inequality
$\lambda >\lambda_*$ hold. Then, there exists $q_* \in (0,1)$ such
that for $\pi_\lambda$-almost all $\gamma$, the Bernoulli bond
percolation with bond probability $q >q_*$ occurs on the infinite
connected component of $(\gamma, \varepsilon_\gamma)$.
\end{proposition}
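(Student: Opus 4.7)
The plan is to derive Proposition \ref{GHpn1} directly from Proposition \ref{GHpnn} by a suitable choice of the thinning function. Given $\lambda>\lambda_*$ and a constant $q\in (0,1]$, I would set $g(x,y)=q\,j(x,y)$, which is measurable, symmetric, and translation invariant. For this choice, the Laplace-transform identity (\ref{Laplacee}) combined with (\ref{Lapl}) shows that $\varsigma^g(\cdot|\gamma)$ is precisely the law obtained from the (deterministic, given $\gamma$) edge configuration $\varepsilon_\gamma$ by retaining each edge independently with probability $q$ and discarding it with probability $1-q$. Hence realizations of $\zeta^g_\lambda$ are pairs $(\gamma,\varepsilon')$ with $\varepsilon'$ a Bernoulli bond-percolation configuration of parameter $q$ on the Gilbert graph $(\gamma,\varepsilon_\gamma)$.

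Next I would verify the integrability condition (\ref{Lapl4}) for this $g$. A direct computation gives
\[
\lambda\int_{\mathbb{R}^d} g(x,0)\,dx \;=\; \lambda\, q\, V(r^*),
\]
which, in view of (\ref{Lapl5}), exceeds $c(d)$ exactly when $q>\lambda_*/\lambda$. Since $\lambda>\lambda_*$ by assumption, the quantity $q_*:=\lambda_*/\lambda$ lies in $(0,1)$, and for every $q>q_*$ Proposition \ref{GHpnn} yields that $\zeta^g_\lambda$-almost all graphs $(\gamma,\varepsilon')$ possess an infinite connected component. By Fubini applied to the disintegration $\zeta^g_\lambda(d\gamma,d\varepsilon')=\varsigma^g(d\varepsilon'|\gamma)\pi_\lambda(d\gamma)$, this is equivalent to the existence, for $\pi_\lambda$-almost every $\gamma$, of an infinite cluster in the $q$-thinning of $\varepsilon_\gamma$.

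It remains to identify that infinite cluster with a subset of the infinite connected component of $(\gamma,\varepsilon_\gamma)$. Since $\varepsilon'\subset\varepsilon_\gamma$, every connected component of $\varepsilon'$ is contained in some connected component of $\varepsilon_\gamma$, and no finite component of $\varepsilon_\gamma$ can harbor an infinite $\varepsilon'$-cluster. Combining this with Proposition \ref{GHpn} and the uniqueness of the infinite component from \cite[Theorem 6.3]{MR}, one concludes that for $\pi_\lambda$-almost every $\gamma$ the infinite $\varepsilon'$-cluster lies in the unique infinite component of $(\gamma,\varepsilon_\gamma)$, which is the desired statement. The only delicate point is this last transfer step, together with arranging a single $\pi_\lambda$-null exceptional set that works simultaneously for the conclusion of Proposition \ref{GHpn} and for the thinning event; this is handled by taking the union of the two null sets, which remains null.
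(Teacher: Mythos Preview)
Your proposal is correct and follows essentially the same route as the paper: apply Proposition~\ref{GHpnn} with the thinning function $g(x,y)=q\,j(x,y)$, compute $\lambda\int g(x,0)\,dx=\lambda q\,V(r^*)$, and read off $q_*=\lambda_*/\lambda=c(d)/(\lambda V(r^*))$. The paper's argument is the one-line remark immediately following the statement, giving exactly this value of $q_*$; your version simply spells out the Fubini step and the (immediate) observation that an infinite cluster of the thinned graph must sit inside the unique infinite component of $(\gamma,\varepsilon_\gamma)$, points the paper leaves implicit.
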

Indeed, by (\ref{Lapl4}) and  (\ref{Lapl5}) one can take
\begin{equation*}
  q_* = \frac{\lambda_*}{\lambda} = \frac{c(d)}{\lambda V(r^*)}.
\end{equation*}
Our next step is to obtain information on the distribution of the
vertex degrees in $(\gamma, \varepsilon_\gamma)$. In other words, we shall
prove that, for $\pi_\lambda$-almost all $\gamma$, $(\gamma, \varepsilon_\gamma)$ belongs to the class of sparse
graphs studied in \cite{KKP}. For $x\in \gamma$, let $n_\gamma(x)$ be
the number of neighbors of $x$ in $\gamma$, i.e., $n_\gamma (x) :=
|\{y \in \gamma: y \sim x\}|$. Clearly, $n_\gamma (x)$ is finite for
 all $\gamma\in \Gamma$. Note, however, that
\begin{equation}
  \label{A6}
\sup_{x\in \gamma}n_\gamma(x)= +\infty,
\end{equation}
also for $\pi_\lambda$-almost all $\gamma$.
For an $\alpha >0$, we introduce the weight function
\begin{equation}
 \label{8}
 w_\alpha (x) = \exp(- \alpha |x|), \qquad   x \in \mathbb{R}^d.
\end{equation}
For $x\in \gamma$ and $\theta >0$, we then consider, cf. Eqs. (4) and (5) in \cite{KKP},
\begin{gather}
 \label{9}
a_\gamma(\alpha, \theta):= \sum_{\{x,y\} \in \varepsilon_\gamma} [w_\alpha (x) + w_\alpha (y)] [n_\gamma(x) n_\gamma(y)]^\theta,\\[.1cm]
b_\gamma(\alpha) := \sum_{x\in \gamma} w_\alpha (x), \nonumber
\end{gather}
and
\begin{gather*}
 A_{1,a} = \{ \gamma \in \Gamma: \forall \alpha > 0 \ \forall \theta >0\  \ a_\gamma (\alpha , \theta) < + \infty\} \\[.1cm]
 A_{1,b} = \{ \gamma \in \Gamma: \forall \alpha > 0 \ \ b_\gamma (\alpha) < +\infty\} \nonumber \\[.1cm]
 A_1 = A_{1,a} \cap A_{1,b}. \nonumber
  \end{gather*}
\begin{proposition}
  \label{1pn}
For all $\lambda>0$, it follows that $A_1 \in \mathcal{B}(\Gamma)$ and $\pi_\lambda (A_1) =1$.
\end{proposition}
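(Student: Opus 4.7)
The plan is to handle measurability and the $\pi_\lambda$-null complement separately, the second step being where the Poisson structure really enters.

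For measurability, I would first observe that for any nonnegative measurable $\Psi\colon\mathbb{R}^d\times\mathbb{R}^d\to\mathbb{R}_+$ the map $\gamma\mapsto \sum_{x\in\gamma}\sum_{y\in\gamma\setminus x}\Psi(x,y)$ is $\mathcal{B}(\Gamma)$-measurable (this is already used just above the Proposition, when the Laplace kernel (\ref{Lapl}) is shown to be a probability kernel), and likewise for sums over $\gamma$. Hence $a_\gamma(\alpha,\theta)$ and $b_\gamma(\alpha)$ are measurable in $\gamma$ for each $\alpha,\theta>0$. To collapse the quantifier $\forall\alpha>0\,\forall\theta>0$ into a countable intersection I would exploit monotonicity: $w_\alpha(x)=e^{-\alpha|x|}$ is non-increasing in $\alpha$, and since every $\{x,y\}\in\varepsilon_\gamma$ has $n_\gamma(x),n_\gamma(y)\ge1$, the factor $[n_\gamma(x)n_\gamma(y)]^\theta$ is non-decreasing in $\theta$; thus
\[
A_{1,a}=\bigcap_{n,m\in\mathbb{N}}\bigl\{\gamma:a_\gamma(1/n,m)<\infty\bigr\},\qquad A_{1,b}=\bigcap_{n\in\mathbb{N}}\bigl\{\gamma:b_\gamma(1/n)<\infty\bigr\},
\]
each a countable intersection of measurable sets.

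To show $\pi_\lambda(A_1)=1$ I would bound the expectations of $a_\gamma(\alpha,\theta)$ and $b_\gamma(\alpha)$. For $b$, the Campbell/Mecke identity for the Poisson measure gives $\mathbb{E}_{\pi_\lambda}[b_\gamma(\alpha)]=\lambda\int_{\mathbb{R}^d}e^{-\alpha|x|}dx<\infty$, so $b_\gamma(\alpha)<\infty$ $\pi_\lambda$-a.s. For $a$, I would rewrite the sum over edges as an unordered pair sum and apply the two-point Mecke formula
\[
\mathbb{E}_{\pi_\lambda}\!\left[\sum_{\{x,y\}\subset\gamma} F(x,y,\gamma)\right]=\frac{\lambda^{2}}{2}\int_{\mathbb{R}^d}\!\!\int_{\mathbb{R}^d}\mathbb{E}_{\pi_\lambda}\bigl[F(x,y,\gamma\cup\{x,y\})\bigr]\,dx\,dy.
\]
Restricting to $|x-y|\le r_*$ (the indicator $j(x,y)$ in (\ref{Lapl})) and noting that $n_{\gamma\cup\{x,y\}}(x)=1+N_\gamma(x,r_*)$ with $N_\gamma(x,r_*)$ Poisson of mean $\lambda V(r_*)$, Cauchy--Schwarz yields a translation-invariant bound
\[
\mathbb{E}_{\pi_\lambda}\bigl[(n_{\gamma\cup\{x,y\}}(x)\,n_{\gamma\cup\{x,y\}}(y))^{\theta}\bigr]\le M(\theta,\lambda,r_*),
\]
finite because Poisson variables have all moments. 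What remains is the geometric integral $\int\!\!\int\mathbf{1}_{|x-y|\le r_*}[w_\alpha(x)+w_\alpha(y)]\,dx\,dy=2V(r_*)\int e^{-\alpha|x|}dx$, which is finite for every $\alpha>0$. Therefore $\mathbb{E}_{\pi_\lambda}[a_\gamma(\alpha,\theta)]<\infty$ and so $a_\gamma(\alpha,\theta)<\infty$ almost surely.

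Combining the two almost-sure statements over the countable family $(\alpha,\theta)=(1/n,m)$ gives $\pi_\lambda(A_{1,a})=\pi_\lambda(A_{1,b})=1$, hence $\pi_\lambda(A_1)=1$. The main technical point — the one I expect to need the most care — is the two-point Mecke step: one must be honest about the addition of $x,y$ inside $n_\gamma$ (so that the bound is genuinely independent of $x,y$ by translation invariance of $\pi_\lambda$) and must check that the Cauchy--Schwarz factorization is uniform in $(x,y)$ within the $r_*$-ball. Everything else is bookkeeping with monotonicity in $(\alpha,\theta)$.
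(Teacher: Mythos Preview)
Your proposal is correct. The measurability argument and the treatment of $b_\gamma(\alpha)$ are essentially identical to the paper's. For $a_\gamma(\alpha,\theta)$, however, the paper takes a slightly different path: instead of invoking the two-point Slivnyak--Mecke formula, it rewrites the edge sum as $a_\gamma(\alpha,\theta)=\sum_{x\in\gamma} w_\alpha(x)\,m_\gamma(\theta,x)$ with $m_\gamma(\theta,x)=\sum_{y\sim x}[n_\gamma(x)n_\gamma(y)]^\theta$, then bounds $m_\gamma(\theta,x)\le\bigl(\sum_{y\in\gamma\setminus x}\mathcal{I}(y-x)\bigr)^{2\theta+1}$ where $\mathcal{I}$ is the indicator of the ball of radius $2r_*$; after that, only the one-point Mecke identity (as stated in the paper) and translation invariance are needed. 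Your route via the two-point formula plus Cauchy--Schwarz is arguably cleaner and avoids the $2r_*$-ball trick, at the cost of invoking a Palm identity one order higher than what the paper actually records. Both approaches ultimately rest on the same fact---all moments of a Poisson count are finite---so the difference is one of packaging rather than substance.
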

\begin{proof}
For each $\gamma$, by (\ref{9}) we have that $b_\gamma (\alpha) \leq b_\gamma (\alpha')$ whenever $\alpha > \alpha'$.
Likewise, $a_\gamma(\alpha, \theta)$ is decreasing in $\alpha$ and increasing in $\theta$.
Then
\begin{gather*}
A_{1,a} = \bigcap_{\alpha , \theta \in \mathbb{Q}_{+}} \{ \gamma \in \Gamma:  a_\gamma (\alpha , \theta) < + \infty\}\in \mathcal{B}(\Gamma), \\[.1cm]
A_{1,b} = \bigcap_{\alpha  \in \mathbb{Q}_{+}} \{ \gamma \in \Gamma:  b_\gamma (\alpha ) < + \infty\} \in \mathcal{B}(\Gamma), \nonumber
\end{gather*}
where $\mathbb{Q}_{+}$ is the set of positive rational numbers. Hence,  it is enough to obtain the $\pi_\lambda$-a.s. finiteness of $a_\gamma (\alpha, \theta)$ and $b_\gamma(\alpha)$ for fixed
$\alpha\in \mathbb{Q}_{+}$ and $\theta\in \mathbb{Q}_{+}$.

By the definition of the Poisson measure $\pi_\lambda$, for
each $n\in \mathbb{N}$ and any measurable and symmetric function
$f:\mathbb{R}^d \times \Gamma \to \mathbb{R}_+$, we
have that
\begin{eqnarray}
  \label{10}
 & & \int_\Gamma \left(\sum_{x\in \gamma} f(x , \gamma \setminus x)  \right) \pi_\lambda (d \gamma)\\[.2cm]& &
  \qquad \qquad  \qquad   = \lambda \int_{\Gamma} \left(\int_{\mathbb{R}^d} f(x ,\gamma) dx \right) \pi_\lambda(d \gamma)\nonumber
\end{eqnarray}
-- the Mecke identity.  Then
\begin{eqnarray*}
 \int_{\Gamma} b_\gamma (\alpha) \pi_\lambda (d \gamma) =
 \int_\Gamma \left( \sum_{x\in \gamma} w_\alpha (x) \right) \pi_\lambda (d \gamma) = \lambda \int_{\mathbb{R}^d} w_\alpha(x) d x < \infty,
\end{eqnarray*}
which yields
\[
\pi_\lambda (\{ \gamma: b_\gamma (\alpha) + \infty\}) =0.
\]
Next, we rewrite (\ref{9}) in the form
\begin{equation}
  \label{11}
a_\gamma(\alpha, \theta)= \sum_{x\in \gamma} w_\alpha (x) m_\gamma(\theta, x), \qquad  m_\gamma(\theta, x) := \sum_{y: y\sim x}[n_\gamma(x) n_\gamma(y)]^\theta.
\end{equation}
Set $B_r(x) = \{y\in \mathbb{R}^d: |x-y|\leq r\}$,  $r>0$, and let $\mathcal{I}:\mathbb{R}^d \to \{0,1\}$
be the indicator of the ball $B_{2 r_*}(0)$. Clearly,
\[
\max_{y\in B_{r_*} (x)\cap\gamma} n_\gamma(y) \leq \sum_{z\in \gamma} \mathcal{I}(z-x).
\]
Applying this in (\ref{11}) we get
\begin{equation*}
 m_\gamma(\theta, x) \leq  [n_\gamma(x)]^{\theta +1} \max_{y: y \sim x} [n_\gamma(y)]^\theta \leq \left(\sum_{y\in \gamma\setminus x}  \mathcal{I}(y-x)\right)^{2\theta+1}.
\end{equation*}
By the latter and (\ref{10}), as well as by the translation
invariance of $\pi_\lambda$, we then obtain from (\ref{11})
\begin{eqnarray*}
\int_{\Gamma} a_\gamma(\alpha, \theta)\pi_\lambda (d \gamma) & \leq &  \int_\Gamma \sum_{x\in \gamma}
 w_\alpha (x) \left(\sum_{y \in \gamma\setminus x}\mathcal{I}(y-x) \right)^{2\theta +1} \pi_\lambda (d\gamma)\\
  & = &  \int_{\mathbb{R}^d} w_\alpha (x) \left\{\int_\Gamma \left(\sum_{y\in \gamma }
    \mathcal{I}(y-x) \right)^{2\theta+1}\pi_\lambda (d\gamma)\right\}  dx\\[.2cm]
& = &  \left(\int_{\mathbb{R}^d} w_\alpha (x) dx \right) \cdot
\int_\Gamma \left(\sum_{y\in \gamma }  \mathcal{I}(y)
\right)^{2\theta+1}\pi_\lambda (d\gamma)\\ & = &  \ell_{2\theta+1}
\left( \lambda V (2r^*) \right)  \cdot\int_{\mathbb{R}^d} w_\alpha (x) dx
 < \infty.
\end{eqnarray*}
Here
\[
\ell_{\vartheta} (\varkappa) := e^{-\varkappa}\sum_{k=1}^\infty k^{\vartheta} \varkappa^k/k!,  \qquad \vartheta, \varkappa > 0,
\]
and $V (2r^*)= \int_{\mathbb{R}^d} \mathcal{I}(x) dx$ is the volume of the
ball $B_{2 r_*}(0)$. The  latter estimate leads to the conclusion $$\pi_\lambda ( \{ \gamma:
a_\gamma(\alpha, \theta) = +\infty\}) =0,$$ which completes the proof.
\end{proof}

\subsection{The Gibbs specification}

As mentioned in the Introduction, the a priori distribution of the spin of a `particle' is determined by a finite symmetric measure $\chi$ on $\mathbb{R}$. We assume that, for some $u>2$ and $\varkappa >0$, the following holds
\begin{equation}
  \label{14}
\int_{\mathbb{R}} \exp\left(\varkappa |t|^u \right) \chi(d t)
<\infty.
\end{equation}
For
a fixed $\gamma$, let $\mathbb{R}^\gamma$ stand for the space of all maps $\sigma: \gamma \to \mathbb{R}$. We equip it with
the topology of point-wise convergence and the corresponding Borel $\sigma$-field $\mathcal{B}(\mathbb{R}^\gamma)$.
Let also $\mathcal{P}(\mathbb{R}^\gamma)$ denote the set of all probability measures on $(\mathbb{R}^\gamma, \mathcal{B}(\mathbb{R}^\gamma))$.
For $\Lambda\in \mathcal{B} (\mathbb{R}^d)$ and $\gamma\in \Gamma$, we set
$\gamma_\Lambda = \gamma\cap \Lambda$ and denote
by $\sigma_\Lambda$ the restriction of $\sigma$ to $\gamma_\Lambda$, i.e.,
$\sigma_\Lambda = (\sigma_x)_{x\in \gamma_\Lambda}$.
For $\sigma, \bar{\sigma} \in \mathbb{R}^\gamma$ and $\Lambda \in \mathcal{B}(\mathbb{R}^d)$, by
$\sigma_\Lambda \times \bar{\sigma}_{\Lambda^c}$ we denote the element $\sigma' \in \mathbb{R}^\gamma$ such that
$\sigma'_\Lambda = \sigma_\Lambda$ and
$\sigma'_{\Lambda^c} = \bar{\sigma}_{\Lambda^c}$.

The ferromagnet that we study is characterized by a ferromagnetic spin-spin interaction,
which for fixed $\Lambda\in \mathcal{B}_{\rm b} (\mathbb{R}^d)$ and $\gamma\in \Gamma$ is described by the following relative energy
functional
\begin{eqnarray}
  \label{C2}
  & & - E^{\gamma}_\Lambda (\sigma_\Lambda | \bar{\sigma}_{\Lambda^c})\\[.2cm] & & \quad =
\sum_{\{x,y\} \subset \gamma_\Lambda}
\phi(|x-y|)\sigma_x \sigma_y + \sum_{ x\in \gamma_\Lambda,  y \in \gamma_{\Lambda^c}}
  \phi(|x-y|) \sigma_x \bar{
 \sigma}_y. \nonumber
\end{eqnarray}
Here  $\phi:\mathbb{R}_{+} \to \mathbb{R}_{+}$ is a measurable and bounded
function such that, $\phi(r) \geq \phi_* >0$ for $r\in [0,r_*]$ and $\phi(r) =0$ for all $r>r_*$.

For $\Lambda \in \mathcal{B}_{\rm b}(\mathbb{R}^d)$, by $\mathcal{B}_\Lambda(\mathbb{R}^{\gamma})$ we denote the smallest $\sigma$-subfield of
$\mathcal{B}(\mathbb{R}^{\gamma})$ which contains all sets $A=\{ \sigma \in \mathbb{R}^\gamma: \sigma_\Lambda \in A^0\}$,
$A^0 \in \mathcal{B}(\mathbb{R}^{\gamma_\Lambda})$, where  $\mathcal{B}(\mathbb{R}^{\gamma_\Lambda})$
is the corresponding Borel $\sigma$-field. The algebra of {\it local sets} is
\begin{equation}
 \label{loc}
\mathcal{B}_{\rm loc}  (\mathbb{R}^{\gamma}) := \bigcup_{\Lambda \in \mathcal{B}_{\rm b}(\mathbb{R}^d)}\mathcal{B}_\Lambda(\mathbb{R}^{\gamma}).
 \end{equation}
We will use the following topology on $\mathcal{P}(\mathbb{R}^\gamma)$, see \cite[Definition 4.2, page 59]{G}.
\begin{definition}
  \label{Gedf}
The topology of local set convergence ($\mathfrak{L}$-topology for short) is the weakest topology on $\mathcal{P}(\mathbb{R}^\gamma)$ that makes the evaluation maps
$\mu \mapsto \mu(A)$ continuous for all $A\in \mathcal{B}_{\rm loc}  (\mathbb{R}^{\gamma})$.
\end{definition}
For $\Lambda \in \mathcal{B}_{\rm b}(\mathbb{R}^d)$ and $\bar{\sigma}\in \mathbb{R}^\gamma$, we define
\begin{eqnarray}
  \label{5}
\Pi^\gamma_\Lambda (A|\bar{\sigma}) = \frac{1}{Z^\gamma_\Lambda
(\bar{\sigma})}\int_{\mathbb{R}^{\gamma_\Lambda}} \mathbb{I}_A (
\sigma_\Lambda \times \bar{\sigma}_{\Lambda^c}) \exp(- \beta
E^\gamma_\Lambda (\sigma_\Lambda | \bar{\sigma}_{\Lambda^c}))
\chi_{\Lambda} (d\sigma_\Lambda),
\end{eqnarray}
where $\mathbb{I}_A $ is the indicator of $A\in
\mathcal{B}(\mathbb{R}^\gamma)$, $E^\gamma_\Lambda$ is as in
(\ref{C2}), and
\begin{gather}
  \label{6}
 \chi_{\Lambda} (d\sigma_\Lambda) := \bigotimes_{x\in \gamma_\Lambda} \chi (d \sigma_x),\\
 Z^\gamma_\Lambda (\bar{\sigma}):= \int_{\mathbb{R}^{|\Lambda|}}\exp(- \beta E^\gamma_\Lambda (\sigma_\Lambda | \bar{\sigma}_{\Lambda^c})) \chi_{\Lambda} (d\sigma_\Lambda). \nonumber
\end{gather}
Thus, for each  $A\in \mathcal{B}(\mathbb{R}^\gamma)$, $\Pi^\gamma_\Lambda
(A|\cdot)$ is $\mathcal{B}(\mathbb{R}^\gamma)$-measurable, and, for
each $\bar{\sigma}\in \mathbb{R}^\gamma$, $\Pi^\gamma_\Lambda (\cdot|
\bar{\sigma})$ is a probability measure on
$(\mathbb{R}^\gamma,\mathcal{B}(\mathbb{R}^\gamma))$. The collection
of probability kernels $\{\Pi^\gamma_\Lambda: \Lambda \in \mathcal{B}_{\rm
b} (\mathbb{R}^d)\}$ is called the {\it Gibbs specification} of the
model we consider, see \cite[Chapter 2]{G}. It enjoys the
consistency property
\begin{equation*}
  \int_{\mathbb{R}^\gamma} \Pi^\gamma_{\Lambda_1} (A|\sigma) \Pi^\gamma_{\Lambda_2} (d \sigma|\bar{\sigma}) = \Pi^\gamma_{\Lambda_2} (A|\bar{\sigma}),
\end{equation*}
which holds for all $A\in \mathcal{B}(\mathbb{R}^\gamma)$,
$\bar{\sigma}\in \mathbb{R}^\gamma$, and all $\Lambda_1 , \Lambda_2
\in \mathcal{B}_{\rm b} (\mathbb{R}^d)$ such that $\Lambda_1 \subset
\Lambda_2$.
\begin{definition}
  \label{1df}
A probability measure $\mu$ on
$(\mathbb{R}^\gamma,\mathcal{B}(\mathbb{R}^\gamma))$ is said to be a
\textit{quenched Gibbs measure} (for a fixed $\gamma$) if it
satisfies the Dobrushin-Lanford-Ruelle equation
\[
\mu(A)= \int_{\mathbb{R}^\gamma} \Pi^\gamma_{\Lambda} (A|\sigma)\mu(d \sigma), \qquad {\rm for} \ \ {\rm all} \ \ A\in \mathcal{B}(\mathbb{R}^\gamma).
\]
The set of all such measures is denoted by $\mathcal{G}(\beta|\gamma )$.
\end{definition}
In modern equilibrium statistical mechanics, the notion of
thermodynamic phase of a system of bounded spins living on a fixed
graph like $\mathbb{Z}^d$ is attributed to the extreme elements of the
set of corresponding Gibbs measures, see, e.g., \cite[Chapter
7]{G} or \cite[Chapter III]{Simon}. However, for unbounded spins,
not all extreme Gibbs measures may have physical meaning. It is
believed that the measures corresponding to observed thermodynamic
states should be supported on spin configurations with `tempered
growth' see  \cite{LP,Bell,Park} or a more recent development in
\cite{AKKR,KP} and \cite[Chapter 3]{mon}. In this approach, only
tempered Gibbs measures are taken into account, and hence a phase
transition is related to the existence of multiple tempered Gibbs
measures\footnote{Note that the theory of quantum stabilization
and phase transitions in quantum anharmonic crystals developed in
\cite{AKKR,KaKP,KP} and \cite[Chapter 6]{mon} with the use of
tempered Gibbs measures is consistent with the corresponding
phenomena observed experimentally.}. We take this approach and
study quenched Gibbs measures introduced in Definition \ref{1df}
with a priori prescribed support properties. We call them {\it
tempered Gibbs states}.
Thus, for an $\alpha
>0$, we define
\begin{equation}
\label{18} \Sigma(\alpha):= \bigg{\{} \sigma \in
\mathbb{R}^\gamma:  \sum_{x\in \gamma} |\sigma_x|^2 w_\alpha (x)  <
\infty \bigg{\}},
\end{equation}
where the weights $w_\alpha$ are as in (\ref{8}).
For
each fixed $\gamma$, $\Sigma(\alpha)$ is a Borel subset of $\mathbb{R}^\gamma$ and its elements are called tempered configurations.
Then
\begin{equation}
  \label{t}
\mathcal{G}_{\rm t} (\beta|\gamma ):= \{ \mu \in \mathcal{G} (\beta|\gamma ): \mu (\Sigma(\alpha)) = 1\}
\end{equation}
is the set of tempered Gibbs states.
\begin{theorem}
  \label{1tm}
Let the single-spin measure $\chi$ be such that (\ref{14}) holds and $A_1$ be as in Proposition \ref{1pn}.
Then, for all $\gamma\in A_1$ and all $\beta>0$, the set of Gibbs states $\mathcal{G}_{\rm t}(\beta|\gamma )$ is nonempty.
Moreover, for each $\gamma\in A_1$, and for each positive $\vartheta$ and
$\alpha$, there exists a finite $C_\gamma(\vartheta,
\alpha)>0$ such that the estimate
\begin{equation}
  \label{15}
\int_{\mathbb{R}^\gamma} \exp\left( \vartheta \sum_{x\in \gamma}
|\sigma_x|^2 w_\alpha (x) \right) \mu (d \sigma) \leq C_\gamma (\vartheta,
\alpha)
\end{equation}
holds uniformly for all $\mu \in \mathcal{G}_{\rm
t}(\beta|\gamma )$.
\end{theorem}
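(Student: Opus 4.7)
The plan is to reduce the statement to \cite[Theorem 3.1]{KKP} by verifying that, for every $\gamma \in A_1$, the Gilbert graph $(\gamma,\varepsilon_\gamma)$ together with the interaction $\phi$ and the a priori measure $\chi$ falls into the class of sparse-graph spin systems treated there. The weighted vertex-degree summability required in that work is exactly the finiteness of $a_\gamma(\alpha,\theta)$ and $b_\gamma(\alpha)$ for all positive $\alpha,\theta$, which holds on $A_1$ by Proposition \ref{1pn}; the single-spin condition (\ref{14}) with $u>2$ provides the super-quadratic confinement that dominates the bilinear ferromagnetic interaction in (\ref{C2}).

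First I would fix $\gamma \in A_1$ and an exhausting sequence $\Lambda_n \uparrow \mathbb{R}^d$ in $\mathcal{B}_{\rm b}(\mathbb{R}^d)$, and consider the local specification kernels $\Pi^\gamma_{\Lambda_n}(\cdot|\bar{\sigma})$ of (\ref{5}) with a tempered boundary condition, say $\bar{\sigma} \equiv 0$. The core analytic step, and the main obstacle, is to establish a uniform-in-$n$ exponential moment estimate of the form
\begin{equation*}
\int_{\mathbb{R}^\gamma} \exp\Bigl(\vartheta \sum_{x \in \gamma} |\sigma_x|^2 w_\alpha(x)\Bigr)\, \Pi^\gamma_{\Lambda_n}(d\sigma\,|\,0) \;\leq\; C_\gamma(\vartheta,\alpha),
\end{equation*}
where $C_\gamma(\vartheta,\alpha)$ depends on $\gamma$ only through $a_\gamma(\alpha,\theta)$ and $b_\gamma(\alpha)$ for suitable $\theta$. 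Here the sparsity in $A_1$ is essential: bounding the boundary energy $\sum_{x\in\gamma_\Lambda, y\in\gamma_{\Lambda^c}} \phi(|x-y|) \sigma_x \bar{\sigma}_y$ uniformly in $\Lambda_n$ requires distributing factors $w_\alpha$ across neighboring vertices and absorbing powers of vertex degrees into $a_\gamma(\alpha,\theta)$ (via Young's inequality), while the quadratic part of the ferromagnetic energy is dominated by the $|t|^u$ tail in (\ref{14}). This is precisely the mechanism carried out in \cite[Theorem 3.1]{KKP}, which we invoke as a black box.

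Once the uniform bound is in hand, tightness of $\{\Pi^\gamma_{\Lambda_n}(\cdot|0)\}_{n}$ in the $\mathfrak{L}$-topology of Definition \ref{Gedf} follows, because finite exponential moments with respect to the weight $\sum_x |\sigma_x|^2 w_\alpha(x)$ imply tightness of all finite-dimensional marginals. Extracting a subsequential limit $\mu$ and passing to the limit in the DLR consistency equation for a fixed local $\Lambda$ (using continuity of $\Pi^\gamma_\Lambda(A|\cdot)$ on the set of tempered configurations together with the uniform moment control) shows that $\mu \in \mathcal{G}(\beta|\gamma)$. The inequality (\ref{15}) for this $\mu$ follows from the uniform bound by Fatou's lemma, and in particular implies $\mu(\Sigma(\alpha)) = 1$, so $\mu \in \mathcal{G}_{\rm t}(\beta|\gamma)$ as defined in (\ref{t}); this proves nonemptiness.

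The uniformity of (\ref{15}) over the whole set $\mathcal{G}_{\rm t}(\beta|\gamma)$ is obtained by a separate argument: for an arbitrary $\mu \in \mathcal{G}_{\rm t}(\beta|\gamma)$, apply the DLR equation $\mu = \int \Pi^\gamma_{\Lambda_n}(\cdot|\bar{\sigma})\,\mu(d\bar{\sigma})$, use the local version of the above bound in which the constant now depends on a moment of the boundary configuration $\bar{\sigma}$ controlled by $a_\gamma$ and $b_\gamma$, and integrate the resulting estimate against $\mu$ using the tempered support of $\mu$ to close the inequality. This standard self-bootstrap argument, available in \cite{KKP}, produces a bound on $\mu$ with the same constant $C_\gamma(\vartheta,\alpha)$, yielding uniformity. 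The genuinely hard step throughout remains the a priori local moment estimate; all else is a soft compactness-plus-DLR procedure enabled by the sparse geometry that $A_1$ guarantees.
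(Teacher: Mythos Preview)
Your proposal is correct and follows essentially the same route as the paper: both reduce the statement to the main theorem of \cite{KKP} by checking that, for $\gamma\in A_1$, the weighted summability of $a_\gamma(\alpha,\theta)$ and $b_\gamma(\alpha)$ from Proposition~\ref{1pn} together with the tail condition (\ref{14}) place the system in the sparse-graph framework of that paper. Your sketch of the mechanism (uniform exponential moment bound, $\mathfrak{L}$-tightness, DLR limit, bootstrap for uniformity) matches the explanatory comments the paper gives immediately after its one-line proof.
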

\begin{proof}
The proof of all the statements of the theorem follows by Theorem 1 of \cite{KKP} since all the
conditions of that theorem are satisfied in view of
Proposition \ref{1pn} and the assumed properties of $\chi$.
\end{proof}
Let us make some comments. The
existence of Gibbs measures follows from the relative
compactness of the family $\{\Pi^\gamma_\Lambda
(\cdot|\bar{\sigma}): \Lambda \in \mathcal{B}_{\rm b}
(\mathbb{R}^d)\}$ in the $\mathfrak{L}$-topology, see Definition \ref{Gedf}, for at least some $\bar{\sigma}\in
\Sigma(\alpha)$. A typical choice of $\bar{\sigma}$,
for which the compactness is proven, is $\bar{\sigma}_x = s\in
\mathbb{R}$ for all $x\in \gamma$. Note that such $\bar{\sigma}$ is
tempered, see (\ref{18}). Then the accumulation points of the family
$\{\Pi^\gamma_\Lambda (\cdot|\bar{\sigma}): \Lambda \in
\mathcal{B}_{\rm b} (\mathbb{R}^d)\}$ are shown to obey the
Dobrushin-Lanford-Ruelle equation and to satisfy the estimate in
(\ref{15}), in which the constant $C_\gamma(\vartheta, \alpha)$ can be
expressed explicitly in terms of the weights as in (\ref{8}) and the
parameters defined in (\ref{9}), cf. Proposition \ref{1pn} above. Thus, the accumulation points are
tempered measures, and hence belong to $\mathcal{G}_{\rm t} (\beta|\gamma )$.
Note that $\mathcal{G}_{\rm t} (\beta|\gamma )$ is nonempty for all
those $\gamma$, for which both $a_\gamma (\alpha, \theta)$ and $b_\gamma(\alpha)$
are finite. By similar arguments, one can show that
$\mathcal{G}_{\rm t}(\beta|\gamma )$ is compact in the $\mathfrak{L}$-topology.
Finally, let us mention that $u>2$ in (\ref{14}) and $\theta$ in Proposition \ref{1pn} should be such that
$\theta (u-2) > 2$. Under this condition the sufficiently fast decay of the tail of $\chi$
compensates destabilizing effect of the property (\ref{A6}) of the underlying graph, see \cite{KKP} for more detail. Since
Proposition \ref{1pn} holds for all $\theta>0$, then we just assume that $u>2$.

A sequence $\{\Lambda_n\}_{n\in \mathbb{N}}\subset \mathcal{B}_{\rm
b}(\mathbb{R}^d)$ is called {\it cofinal} if
$\Lambda_{n} \subset \Lambda_{n+1}$, $n\in \mathbb{N}$, and each
$\Lambda  \in \mathcal{B}_{\rm b}(\mathbb{R}^d)$ is contained in a
certain $\Lambda_n$. Given $\bar{\sigma}\in \Sigma(\alpha)$, the relative  compactness of the family
$\{\Pi^\gamma_\Lambda (\cdot|\bar{\sigma}): \Lambda \in \mathcal{B}_{\rm b}
(\mathbb{R}^d)\}$ yields that there exists a
cofinal sequence $\{\Lambda_n\}_{n\in \mathbb{N}}$ such that the
sequence $\{\Pi^\gamma_{\Lambda_n} (\cdot|\bar{\sigma})\}_{n\in
\mathbb{N}}$  converges in the $\mathfrak{L}$-topology  to a certain
element of $\mathcal{G}_{\rm t}(\beta|\gamma)$.
For $a> 0$, by
\begin{equation}
  \label{mu-a}
  \mu^{\pm
a}\in \mathcal{G}_{\rm t}(\beta|\gamma)
\end{equation}
we denote limiting elements of $\mathcal{G}_{\rm t}(\beta|\gamma)$
which correspond to $\bar{\sigma}_x= \pm a$ for all  $x \in \gamma$.  Each such $\mu^{\pm a}$
depends on the sequence $\{\Lambda_n\}_{n\in \mathbb{N}}$ along which
it has been attained. Note that only limiting Gibbs measures can approximate thermodynamic states of large finite systems, see \cite[Section 7.1]{G}.

Now we turn to the single-spin measure $\chi$. If it has compact
support, as was the case in \cite{Romano}, then (\ref{14}) clearly
holds for any $u$ and $\varkappa$. The most known example of such
$\chi$ is
\begin{equation}
  \label{16}
 \chi(d t) = [\delta_{-1} ( dt) + \delta_{+1} ( dt)]/2,
\end{equation}
which corresponds to an Ising magnet. Here  $\delta_s$ is the Dirac
measure concentrated at $s\in\mathbb{R}$. We reserve a special notation
$\mathcal{G}^{ \mathrm{Ising}}(\beta|\gamma)$ for the set of all
corresponding Gibbs measures. By
$\nu^{\pm}\in \mathcal{G}^{ \mathrm{Ising}}(\beta|\gamma)$, we denote
the limiting Gibbs measures as in  (\ref{mu-a}) with $a =1$. In this
case, however,  $\nu^{\pm}$ are independent
of the sequences $\{\Lambda_n\}_{n\in \mathbb{N}}$ along which they
were attained. This holds because, for each $x$ and $\nu \in
\mathcal{G}^{ \mathrm{Ising}}(\beta|\gamma)$,
\[
\int_{\mathbb{R}^\gamma} \sigma_x \nu^{-} (d \sigma) \leq
\int_{\mathbb{R}^\gamma} \sigma_x \nu(d \sigma) \leq
\int_{\mathbb{R}^\gamma} \sigma_x \nu^{+} (d \sigma).
\]
That is, $\nu^{+}$ and $\nu^{-}$ are the maximum  and minimum
elements of $\mathcal{G}^{ \mathrm{Ising}}(\beta|\gamma)$,
respectively, cf. \cite[Theorem 3.8]{KP}.

In the case of `unbounded' spins, a natural choice of the
single-spin  measure is
\begin{equation*}
  \chi(d t) = \exp\left( - V(t)\right) dt,
\end{equation*}
where $V: \mathbb{R} \to \mathbb{R}$ is a measurable even function
such that: (a) the set $\{t\in \mathbb{R}: V(t) < +\infty\}$ is of
positive Lebesgue measure; (b) $V(t)$ increases at infinity as
$|t|^{u + \epsilon}$ with some $\epsilon>0$ and $u$ being as in
(\ref{14}). This includes the case where $V$ is a polynomial of even
degree at least $4$ with positive leading coefficient, cf.
\cite{KKP,KKP1,KP,LP}.

\subsection{The question of uniqueness}
\label{2.4ss}

Once the existence of Gibbs states has been established, the next
the problem of their uniqueness/nonuniqueness arises.
Then a phase transition is understood as the possibility to
pass from the uniqueness to nonuniqueness by changing the relevant parameters of the model.
Thus, prior to proving non-uniqueness of $\mu \in \mathcal{G}_{\rm t}(\beta|\gamma)$, which holds for $\pi_\lambda$-almost all $\gamma$ whenever $\beta$ and
$\lambda$ are large enough, see Theorem \ref{2tm} below, we address the question
of whether the same uniqueness does actually hold for some values of these parameters. For small enough $\lambda$,   all the connected
components of the graph $(\gamma, \varepsilon_\gamma)$ are
finite, and hence $\mathcal{G}_{\rm t}(\beta|\gamma)$
is a singleton
for all $\beta$. On the other hand, $\mathcal{G}_{\rm t}(\beta|\gamma)$ is a singleton  if and only if, for each $x\in \gamma$,
arbitrary $\bar{\sigma}\in \Sigma(\alpha)$, and any cofinal sequence
$\{\Lambda_n\}_{n \in \mathbb{N}}$, one has
\begin{equation}
  \label{un}
\lim_{n\to +\infty} \int_{\mathbb{R}^\gamma} \sigma_x
\Pi^\gamma_{\Lambda_n}( d\sigma|\bar{\sigma}) = 0.
\end{equation}
This equivalence holds for any symmetric ferromagnet satisfying the bound in (\ref{15}), that can be proven by
standard arguments based on the Strassen theorem, see \cite{KP} for
more detail.
Actually, for models with `unbounded' spins living on an infinite
connected graph with globally unbounded degree,  which by (\ref{A6}) is the case in our situation, there are no tools\footnote{The celebrated Dobrushin
uniqueness technique is not applicable here.} for
proving (\ref{un}). For the Ising
ferromagnet, the uniqueness in question  can be obtained by percolation
arguments, see \cite[Theorem 7.2]{GHM}.

\section{The Phase Transition}
\label{sec4}
\subsection{The statement}
\label{ss4}

Recall that
by a phase transition in the considered
ferromagnet we mean the fact that the set of tempered Gibbs states
$\mathcal{G}_{\rm t}(\beta|\gamma)$ for $\pi_\lambda$-almost all $\gamma$  contains at least two
elements if $\beta$ and $\lambda$ are big enough. It is equivalent to the appearance of a nonzero
magnetization in states $\mu^{\pm a}\in \mathcal{G}_{\rm t}(\beta|\gamma)$, cf. \cite[Chapter 19]{G} and (\ref{un}).

Let us note that there is no interaction between spins in
different connected components of the underlying graph. Then for a phase transition to occur it is
necessary that the graph $(\gamma, \varepsilon_\gamma)$ possess an
infinite connected component, which holds for $\pi_\lambda$-almost all $\gamma$ whenever $\lambda > \lambda_*$, see \cite{MR,Penrose} and also \cite[Corollary 3.7]{GH} and \cite[Theorem 3.1]{GeL}. For $\lambda <\lambda_{\ast}$, we have no infinite
connected component of $(\gamma, \varepsilon_\gamma)$ and thus $\left\vert
\mathcal{G}(\beta|\gamma)\right\vert =1$ for all $\beta$ and $\pi_\lambda$-almost all $\gamma$. In order
to obtain a sufficient condition for a phase transition to  occur,
we will explore the well-known relationship  between the Bernoulli
bond percolation on the fixed sample graph $(\gamma, \varepsilon_\gamma)$, established in Propositions \ref{GHpn} and \ref{GHpn1},
and the existence of multiple Gibbs states in the corresponding
Ising model, established in \cite{H}. Our goal is to prove the following result.
\begin{theorem}
  \label{2tm}
Let the measure $\chi$ be as in Theorem \ref{1tm} and such that
$\chi(\{0\}) < \chi (\mathbb{R})$. Assume also that the intensity
$\lambda $ of the underlying Poisson point process satisfies the
condition $\lambda >\lambda _{\ast }$, and thus the typical graph $(\gamma, \varepsilon_\gamma)$ has an infinite connected component. Then there exists a
constant $\beta^*>0$ such that, for $\beta > \beta^* $ and $\pi_\lambda$-almost all $\gamma$,
the sets $\mathcal{G}_{\rm t}(\beta|\gamma)$ contain at least two elements.
\end{theorem}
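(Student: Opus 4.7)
The plan is to chain together the three comparison inclusions already assembled in the paper: (i) the percolation statement of Proposition \ref{GHpn1}, yielding $\pi_\lambda(A_3(q)) = 1$ for every $q > q_*$ whenever $\lambda > \lambda_*$; (ii) the Higuchi-type bound $A_2^{\rm Ising}(\beta') \supset A_3(q)$ whenever $\beta' > \tfrac{1}{2}\log\tfrac{1+q}{1-q}$, cited from \cite{H}; and (iii) the Wells inclusion $A_2(\beta) \supset A_2^{\rm Ising}(a^2\beta)$ of (\ref{A3}). Concretely, I would fix $\lambda > \lambda_*$, choose a rational $q \in (q_*(\lambda), 1)$, and set
\[
\beta^* := \frac{1}{2 a^2}\log\frac{1+q}{1-q},
\]
with $a > 0$ the Wells constant attached to $\chi$ via (\ref{21}); the hypothesis $\chi(\{0\}) < \chi(\mathbb{R})$ is exactly what guarantees $a > 0$. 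Then for every $\beta > \beta^*$ the composition (iii)--(ii)--(i) gives $A_2(\beta) \supset A_2^{\rm Ising}(a^2\beta) \supset A_3(q)$, and intersecting with the full-measure set $A_1$ of Proposition \ref{1pn}, on which $\mathcal{G}_{\rm t}(\beta|\gamma)$ is nonempty by Theorem \ref{1tm}, delivers a $\pi_\lambda$-full subset of $\gamma$'s with $|\mathcal{G}_{\rm t}(\beta|\gamma)| \geq 2$.

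The real work — and the main obstacle, in my view — is justifying (iii) for the \emph{tempered} infinite-volume states $\mu^{\pm a}$ of (\ref{mu-a}) rather than for finite-volume Gibbs measures only. The route I would take is to work with $\mu^{\pm a}$ built along a common cofinal sequence $\{\Lambda_n\}_{n\in\mathbb{N}}$, together with the Ising extremal limits $\nu^{\pm}$ at inverse temperature $a^2\beta$ taken along the same sequence. At each finite $\Lambda_n$ the Wells inequality (to be proved in Proposition \ref{Wpn}) gives, for each $x \in \gamma_{\Lambda_n}$,
\[
\int_{\mathbb{R}^\gamma} \sigma_x\, \Pi^\gamma_{\Lambda_n}(d\sigma \,|\, {+}a) \;\geq\; a \int_{\mathbb{R}^\gamma} s_x\, \Pi^{\gamma,\mathrm{Ising}}_{\Lambda_n}(ds \,|\, {+})\Big|_{a^2\beta}.
\]
Since $\sigma_x$ is a local observable and the uniform moment bound (\ref{15}) of Theorem \ref{1tm} furnishes uniform integrability along the sequence, $\mathfrak{L}$-convergence along an appropriate subsequence passes both sides to the limit, yielding $\int \sigma_x\,\mu^{+a}(d\sigma) \geq a\int s_x\,\nu^+(ds)$. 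If $\gamma \in A_2^{\rm Ising}(a^2\beta)$, then the right-hand side is strictly positive at some $x$, so the left-hand side is too; combined with the spin-flip symmetry of $\chi$ and of the energy (\ref{C2}), this forces $\mu^{+a} \neq \mu^{-a}$ and thus $\gamma \in A_2(\beta)$.

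A minor technical point I would dispose of in passing is the Borel measurability of $A_2(\beta)$ and $A_2^{\rm Ising}(a^2\beta)$ in $(\Gamma,\mathcal{B}(\Gamma))$, which is needed before $\pi_\lambda$ can be applied to the above chain of inclusions. This follows from the measurability of $\gamma \mapsto \Pi^\gamma_\Lambda(A|\bar{\sigma})$ for local events $A$ together with the fact that the $\mathfrak{L}$-topology of Definition \ref{Gedf} is generated by countably many such evaluation maps, so that the limits $\mu^{\pm a}$ constructed along a fixed cofinal exhaustion and the event $\{\mu^{+a} \neq \mu^{-a}\}$ both depend measurably on $\gamma$.
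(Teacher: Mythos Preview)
Your main argument is essentially the paper's own: the proof of Theorem~\ref{2tm} rests on Lemma~\ref{1lm}, which establishes $\int\sigma_o\,\mu^{+a}(d\sigma)>0$ by first proving it for the Ising model via H\"aggstr\"om's percolation criterion and then transferring it to the general model via the Wells inequality (Proposition~\ref{Wpn}), whose infinite-volume form is obtained exactly as you describe---finite-volume inequality~(\ref{24}) plus passage to the limit using the uniform bound~(\ref{15}).

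Two remarks. First, your formula for $\beta^*$ omits a factor of $\phi_*^{-1}$: the result you cite from~\cite{H} is for the Ising model with \emph{constant} edge coupling, so the paper first applies it to the model with $\phi\equiv\phi_*$ and then uses a GKS comparison to reach the model with variable $\phi\geq\phi_*$; you should insert this step. Second, your final paragraph on measurability of $A_2(\beta)$ is both unnecessary and overconfident. It is unnecessary because the conclusion ``$|\mathcal{G}_{\rm t}(\beta|\gamma)|\geq 2$ for $\pi_\lambda$-a.a.\ $\gamma$'' only requires exhibiting \emph{some} measurable full-measure set on which multiplicity holds, and $A_1\cap A_3(q)$ already does that job. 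It is overconfident because the assertion that the limits $\mu^{\pm a}$ along a fixed cofinal exhaustion depend measurably on $\gamma$ is precisely what cannot be taken for granted: the subsequence along which $\Pi^\gamma_{\Lambda_n}(\cdot|\bar\sigma)$ converges may depend on $\gamma$ in an uncontrolled way (the ``chaotic size dependence'' phenomenon), and the paper devotes all of Section~\ref{Quensec} and the machinery of Koml\'os' theorem to recovering a measurable selection. Drop that paragraph.
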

The proof of this statement is based on the following result, cf. (\ref{mu-a}).
\begin{lemma}
  \label{1lm}
Let the conditions of Theorem \ref{2tm} be satisfied. Then there
exist $a>0$,  and $\beta_*>0$ such
that, for all $\beta > a^2 \beta_*$, all $\mu^{+a}\in \mathcal{G}(\beta|\gamma)$, and some $o \in \gamma$,
the following estimate holds for $\pi_\lambda$-almost all $\gamma$:
\begin{equation}
  \label{20}
\int_{\mathbb{R}^\gamma} \sigma_o \mu^{+a}(d \sigma) > 0.
\end{equation}
\end{lemma}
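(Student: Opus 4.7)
The plan is to combine three ingredients already available in the paper: (i) the continuum percolation of Proposition \ref{GHpn1} (for $\lambda > \lambda_*$ and $q$ above the induced threshold $q_*$, Bernoulli bond percolation at level $q$ survives on the infinite component of $(\gamma,\varepsilon_\gamma)$ for $\pi_\lambda$-a.e.\ $\gamma$); (ii) Higuchi's percolation criterion for Ising multi-phase (the inclusion $A^{\mathrm{Ising}}_2(\tilde\beta) \supset A_3(q)$ for $\tilde\beta > \tfrac12\log\tfrac{1+q}{1-q}$, from \cite{H}); (iii) the Wells correlation inequality (Proposition \ref{Wpn}, equation (\ref{A3})), which compares our unbounded-spin model at inverse temperature $\beta$ to an Ising model at inverse temperature $a^2\beta$, with $a>0$ the Wells constant built from $\chi$ via (\ref{21}). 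The key observation is that the lemma only requires positivity at \emph{some} vertex $o\in\gamma$, so it suffices to place $o$ inside the infinite cluster.

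First, I would fix $q\in(q_*,1)$ and set $\beta_* := \tfrac{1}{2\phi_*}\log\tfrac{1+q}{1-q}$. By Proposition \ref{GHpn1}, there is a Borel set of full $\pi_\lambda$-measure on which the Gilbert graph possesses a unique infinite cluster $C_\infty(\gamma)$ surviving the $q$-thinning. I would make a $\gamma$-measurable choice $o = o(\gamma)\in C_\infty(\gamma)$ (for instance, the vertex of $C_\infty(\gamma)$ closest to the origin in the lexicographic order on $\mathbb R^d$). Applying Higuchi's theorem to the Ising model on $(\gamma,\varepsilon_\gamma)$ with uniform couplings $\tilde\beta\phi_*$ — and hence, by the second Griffiths inequality, for the actual stronger couplings $\tilde\beta\,\phi(|x-y|)\geq \tilde\beta\phi_*$ — yields $\int s_o\,\nu^+_{\tilde\beta}(ds) > 0$ whenever $\tilde\beta > \beta_*$, where $\nu^+_{\tilde\beta}$ denotes the plus-boundary Ising Gibbs measure.

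Second, I would invoke the Wells inequality at the finite-volume level: for each $\Lambda\in\mathcal{B}_{\rm b}(\mathbb R^d)$, Proposition \ref{Wpn} gives a pointwise bound
\[
\int \sigma_o\,\Pi^\gamma_\Lambda(d\sigma|+a) \;\geq\; a\int s_o\,\Pi^{\mathrm{Ising},\gamma}_\Lambda(ds|+1),
\]
the Ising specification on the right carrying inverse temperature $a^2\beta$ and couplings $\phi(|x-y|)$. Choosing $\tilde\beta = a^2\beta$ to exceed $\beta_*$ — written in the paper's convention as $\beta > a^2\beta_*$ — and extracting a common cofinal sequence $\{\Lambda_n\}$ along which the spin-model kernels $\Pi^\gamma_{\Lambda_n}(\cdot|+a)$ converge in the $\mathfrak L$-topology to $\mu^{+a}\in\mathcal{G}_{\rm t}(\beta|\gamma)$ (available by Theorem \ref{1tm}) and along which the plus-boundary Ising expectations decrease monotonically to $\int s_o\,\nu^+_{a^2\beta}(ds)$, one passes to the infinite-volume bound
\[
\int \sigma_o\,\mu^{+a}(d\sigma) \;\geq\; a\int s_o\,\nu^+_{a^2\beta}(ds) \;>\; 0.
\]
The strict positivity of $a$ comes from the assumption $\chi(\{0\})<\chi(\mathbb R)$ via the definition in (\ref{21}).

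The main obstacle I anticipate is the finite-volume Wells step: Proposition \ref{Wpn} must produce a \emph{pointwise} inequality (uniform in the tempered boundary condition $+a$ outside $\Lambda$) that survives the $\mathfrak L$-limit defining $\mu^{+a}$. Extracting a single cofinal sequence that simultaneously realizes the spin-model $\mathfrak L$-limit \emph{and} the (monotone FKG) Ising approximation is a standard diagonal argument, but it has to be combined with careful bookkeeping of the Wells constant $a$ and the $\chi$-dependent prefactor to land at the stated threshold. The $\gamma$-measurable selection of $o(\gamma)\in C_\infty(\gamma)$, and the uniqueness of the infinite cluster needed to make this selection canonical, are comparatively routine.
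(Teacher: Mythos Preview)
Your proposal is correct and follows essentially the same route as the paper: continuum percolation (Proposition~\ref{GHpn1}) feeds H\"aggstr\"om's Ising criterion \cite{H} on the infinite cluster, a GKS comparison upgrades from the constant coupling $\phi_*$ to the actual $\phi$, and the Wells inequality (Proposition~\ref{Wpn}) transfers the positive magnetization to the unbounded-spin model at the rescaled temperature. Two minor remarks: the reference \cite{H} is H\"aggstr\"om, not Higuchi; and the $\gamma$-measurable selection of $o$ that you build is not needed for the lemma as stated (which asks only for \emph{some} $o\in\gamma$), though it is a natural refinement.
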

The proof of this lemma is given in the next subsection. \vskip.2cm
\noindent \textit{Proof of Theorem \ref{2tm}:} Since the integral in
(\ref{20}) is the limit of those in (\ref{un}) with $\bar{\sigma}_x
= a$, then (\ref{20}) contradicts (\ref{un}) and hence implies
non-uniqueness, which ought to hold for $\beta > \beta^*:= a^2 \beta_*$. On the other hand, by the invariance of $\chi$ and
of the interaction in (\ref{C2}) with respect to the transformation
$\sigma \to - \sigma$ and $\bar{\sigma} \to - \bar{\sigma}$, we have
\[
\int_{\mathbb{R}^\gamma} \sigma_o \mu^{+a}(d \sigma) = - \int_{\mathbb{R}^\gamma} \sigma_o \mu^{-a}(d \sigma).
\]
Then (\ref{20}) yields $\mu^{+a} \neq \mu^{-a}$ and hence the
multiplicity in question. Note that $o$ in (\ref{20}) belongs to the
infinite connected component of $(\gamma, \varepsilon_\gamma)$, and the
integral in (\ref{20}) is the mean value of the spin at this vertex
in state $\mu^{+a}$.

\subsection{Proof of Lemma \ref{1lm}}

First, by means of the percolation arguments of \cite{H},  we prove the
lemma for the Ising model. Then we extend the proof to
the general case by comparison inequalities.

Recall that the single-spin measure of the Ising model is given in
(\ref{16}), $\mathcal{G}^{ \mathrm{Ising}}(\beta|\gamma)$
denotes the set of all corresponding Gibbs measures, and $\nu^+ \in
\mathcal{G}^{\rm Ising}(\beta|\gamma)$ is the maximum
Gibbs measure as in (\ref{mu-a}) with $a=1$. We are going to use the
key fact proven in \cite{H}:   the Ising  model with constant
intensities $\phi(|x-y|) = \phi_*>0$ on the edges of an infinite
graph has at least two phases if and only if the graph admits the
Bernoulli bond percolation with critical probability $q_* \in (0,1)$
if $\beta  >  [\log (1+q_*) - \log(1-q_*)]/2\phi_*$. In our case,
for $\pi_\lambda$-almost all $\gamma$, the graph $\gamma$ admits
this percolation and the threshold probability satisfies $q_* \geq
\lambda_*/\lambda$, see Proposition \ref{GHpn1}. Then, for some
$o\in \gamma$, it follows that
\begin{equation}
  \label{I}
\int_{\mathbb{R}^\gamma} \sigma_o \tilde{\nu}^{+} (d \sigma) >0,
\end{equation}
see \cite[Theorem 2.1]{H} and also the proof of Lemma 4.2 therein.
Here $\tilde{\nu}^{+}$ is the corresponding Gibbs measure of the Ising model with  $\phi(|x-y|) = \phi_*>0$.
By the standard GKS inequality, see, e.g., \cite[Subsection 3.4]{H}, we have
\[
\int_{\mathbb{R}^\gamma} \sigma_o \nu^{+} (d \sigma) \geq \int_{\mathbb{R}^\gamma} \sigma_o \tilde{\nu}^{+} (d \sigma),
\]
which together with (\ref{I}) yields the proof in this case.

Now we turn to the general case and
 estimate the integral in (\ref{20}) from below by the
corresponding integral with respect to the maximum Gibbs measure
$\nu^{+}$ of the Ising model with a rescaled interaction intensity.
The proof of the lemma immediately follows from the Wells inequality
used, e.g., in \cite{OS}.
\begin{proposition}[Wells inequality]
  \label{Wpn}
Let $a>0$ be such that
\begin{equation}
  \label{21}
\chi([a \sqrt{2}, +\infty))\geq \chi([0,a]).
\end{equation}
Then, for each $x\in \gamma$ and each $\mu^{+a} \in
\mathcal{G}(\beta|\gamma)$, as well as for $\nu^{+} \in
\mathcal{G}^{\rm Ising}(a^2\beta|\gamma)$, we have that
\begin{equation}
  \label{22}
\int_{\mathbb{R}^\gamma} \sigma_x \mu^{+a} (d\sigma) \geq a \int_{\mathbb{R}^\gamma} \sigma_x \nu^{+} (d\sigma) .
\end{equation}
\end{proposition}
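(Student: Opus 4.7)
The plan is to reduce the claim to finite volume, double the system and pass to sum/difference coordinates, decompose the sum variables into signs and magnitudes, and finally use the Wells hypothesis $\chi([a\sqrt{2}, +\infty)) \geq \chi([0, a])$ together with GKS inequalities to close the comparison.

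Since $\mu^{+a}$ and $\nu^+$ arise as $\mathfrak{L}$-limits of finite-volume Gibbs measures $\Pi^\gamma_{\Lambda_n}(\cdot|+a)$ and $\Pi^{\gamma,\mathrm{Ising}}_{\Lambda_n}(\cdot|+1)$ along a common cofinal sequence, and $\sigma_x$ is a local observable which is uniformly integrable by the bound (\ref{15}), it is enough to prove the finite-volume analogue of (\ref{22}) uniformly in $\Lambda$ and then pass to the limit. For the doubling, I would introduce two independent copies $\sigma^{(1)}, \sigma^{(2)}$ of the continuous system with $+a$ boundary and change variables to $p_x := (\sigma^{(1)}_x + \sigma^{(2)}_x)/\sqrt{2}$, $q_x := (\sigma^{(1)}_x - \sigma^{(2)}_x)/\sqrt{2}$. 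The interaction (\ref{C2}) decouples in the bulk as $-\beta E(p_\Lambda|0) - \beta E(q_\Lambda|0)$, and the $+a$ boundary contribution concentrates on the $p$-variables as $\sqrt{2}\beta a\sum_{x\in\gamma_\Lambda} p_x\phi^{\mathrm{bdy}}_x$, with $\phi^{\mathrm{bdy}}_x := \sum_{y\in\gamma_{\Lambda^c},\,y\sim x}\phi(|x-y|)$. Crucially, $\chi\otimes\chi$ in $(p, q)$ coordinates is invariant under $p_x \to -p_x$ and $q_x \to -q_x$ independently at each $x$: the second via the copy swap $\sigma^{(1)} \leftrightarrow \sigma^{(2)}$, and the first via the composition of that swap with the joint reflection $\sigma^{(i)} \to -\sigma^{(i)}$.

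Writing $p_x = \epsilon_x|p_x|$ with $\epsilon_x \in \{-1, +1\}$, the conditional law of $(\epsilon_x)_{x\in\gamma_\Lambda}$ given $(|p_y|, q_y)_{y\in\gamma_\Lambda}$ is therefore an Ising measure with ferromagnetic couplings $\beta\phi(|x-y|)|p_x||p_y|$ and non-negative external fields $\sqrt{2}\beta a|p_x|\phi^{\mathrm{bdy}}_x$. Since $\langle q_x\rangle_{\mathrm{dbl}} = 0$ by symmetry, $\langle \sigma_x\rangle^{+a}_\Lambda = (1/\sqrt{2})\langle p_x\rangle_{\mathrm{dbl}}$. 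Setting $\eta_y := \mathbb{I}[|p_y| \geq a\sqrt{2}]$ and $\tilde p_y := a\sqrt{2}\,\eta_y$, the first GKS inequality yields $\langle\epsilon_x\rangle_{|p|,q} \geq 0$ and monotone non-decreasing in $|p|$, whence
\[
\langle p_x\rangle_{\mathrm{dbl}} = \mathbb{E}\bigl[|p_x|\langle \epsilon_x\rangle_{|p|, q}\bigr] \geq a\sqrt{2}\,\mathbb{E}\bigl[\eta_x \langle \epsilon_x\rangle_{\tilde p,\, q}\bigr].
\]
The inner Ising law has couplings $2a^2\beta\phi(|x-y|)\eta_x\eta_y$ and fields $2a^2\beta\phi^{\mathrm{bdy}}_x\eta_x$: on active sites ($\eta = 1$) these exceed the target couplings $a^2\beta\phi$ and fields $a^2\beta\phi^{\mathrm{bdy}}_x$ by a factor of two. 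The Wells hypothesis is then used to show that the marginal of each $\eta_x$ under the doubled Gibbs measure stochastically dominates a Bernoulli variable whose parameter depends only on $\chi$, so that a standard FKG / random-cluster comparison trades the dilution for the factor-of-two gain and gives $\mathbb{E}\bigl[\eta_x \langle \epsilon_x\rangle_{\tilde p, q}\bigr] \geq \langle s_x\rangle^{+, \mathrm{Ising}}_{a^2\beta, \Lambda}$. Combining with $\langle\sigma_x\rangle^{+a}_\Lambda = (1/\sqrt{2})\langle p_x\rangle_{\mathrm{dbl}}$ yields the finite-volume version of (\ref{22}), and passing to the limit along $\{\Lambda_n\}$ closes the argument.

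The main obstacle is this final comparison. The Wells condition is a single-spin statement about $\chi$, but the indicators $(\eta_x)$ are strongly correlated through the doubled Hamiltonian, so turning Wells into a true stochastic Bernoulli minorant for $(\eta_x)$ under the Gibbs measure requires combining it with the conditional FKG structure of the doubled system---itself following from the sign invariances of $\chi\otimes\chi$ noted above---and then carefully executing the random-cluster domination of the diluted Ising measure by the undiluted Ising model at coupling $a^2\beta$. This is the technical heart of the argument and constitutes the refinement over \cite[Appendix]{Bricmont}.
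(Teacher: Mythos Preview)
Your reduction to finite volume is right, but from that point on your route diverges sharply from the paper's and the last step is a genuine gap rather than a refinement.

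The paper does \emph{not} double the continuous system. It couples one copy of the continuous system $\sigma$ (under $\chi$, boundary $+a$) with one copy of the \emph{Ising} system $\tilde\sigma$ (under $\chi^{\rm Ising}$, boundary $+1$, coupling $a^2\beta$), writes the difference of magnetizations directly as a single double integral with integrand proportional to $(\sigma_x - a\tilde\sigma_x)$, and then passes to the variables $u^{\pm}_y = (\sigma_y \pm a\tilde\sigma_y)/\sqrt{2}$. The algebraic identity $\sigma_y\sigma_z + a^2\tilde\sigma_y\tilde\sigma_z = u^+_y u^+_z + u^-_y u^-_z$ decouples the Hamiltonian into a $u^+$-part and a $u^-$-part with nonnegative coefficients; expanding the exponential then reduces the whole question to the nonnegativity of the single-site moments
\[
\int_{\mathbb{R}}\bigl[(\sigma+a)^{m}(\sigma-a)^{n} + (\sigma-a)^{m}(\sigma+a)^{n}\bigr]\chi(d\sigma),
\]
obtained after integrating out $\tilde\sigma_x = \pm 1$. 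The even/odd parity cases are immediate, and the odd--odd case is checked by splitting $[0,\infty)$ into $[0,a]$, $[a,a\sqrt{2}]$, $[a\sqrt{2},\infty)$ and invoking the Wells hypothesis (\ref{21}) directly on $\chi$. No FKG, no random-cluster, no stochastic domination is used anywhere.

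Your scheme instead doubles the \emph{continuous} system, so that the single-site reference measure in the $(p,q)$ coordinates is the rotated $\chi\otimes\chi$, not $\chi\otimes\chi^{\rm Ising}$. The Wells condition (\ref{21}) is a statement about $\chi$ itself; it gives no direct control on the distribution of $|p_x| = |\sigma^{(1)}_x+\sigma^{(2)}_x|/\sqrt{2}$, and certainly not on its law under the interacting doubled Gibbs measure. Your final step --- that the Wells hypothesis yields a Bernoulli minorant for the correlated field $(\eta_x)$, and that a random-cluster argument then trades the site dilution against the surplus factor of two in the couplings --- is exactly the step you flag as the ``main obstacle'', and it is not carried out. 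A diluted Ising model at coupling $2J$ does not in general dominate the full model at coupling $J$ without quantitative control on the dilution density, and nothing in (\ref{21}) supplies that control for your $\eta$. So as written the argument does not close; the key idea you are missing is to couple with the Ising spins from the start, which makes the single-site measure $\chi\otimes\chi^{\rm Ising}$ and puts the Wells hypothesis to work by a direct one-variable computation.
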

As the original publication \cite{W} is hardly attainable and the
proof in \cite{Bricmont} contains numerical inaccuracies, for the reader convenience
in Section \ref{Asec} we give a short proof of this inequality in
the form suitable for our purposes.

\section{Random Gibbs measures}
\label{Quensec}

Random Gibbs measures of spin systems are supposed to
depend on the random parameters ($\gamma$ in ours case) in a measurable way, see, e.g.,
\cite[Definition 6.2.5]{Bov} and
\cite[Definition 2.3]{KKP1}. At the same time, the measures
introduced in Definition \ref{1df} and (\ref{t}) are defined on
spaces which themselves depend on $\gamma$, and thus one cannot speak of the corresponding measurability in this setting.
In order to settle this
problem, we use the results of \cite{DKKP} where
random Gibbs measures are defined as conditional measures
on spaces of marked configurations.
Here we
outline the main points of this construction and prove the non-uniqueness of such measures, see Theorem \ref{etatm} below.

For our model, the space of marked configurations is
\begin{equation*}
\widehat{\Gamma} = \{ \hat{\gamma}= ( \gamma, \sigma): \gamma \in \Gamma, \ \sigma \in \mathbb{R}^\gamma\},
\end{equation*}
where $\Gamma$ is as in (\ref{C1}).
Thus,  the elements of $\widehat{\Gamma}$ consist of pairs
$(x,\sigma_x)$, and $\hat{\gamma} = \hat{\gamma}'$ implies $\gamma = \gamma'$.
In the sequel, we use  the canonical projection
\begin{equation}
  \label{P1}
\widehat{\Gamma} \ni \hat{\gamma} \mapsto \hat{p}(\hat{\gamma}) =\gamma \in \Gamma,
\end{equation}
and equip $\widehat{\Gamma}$
 with the following topology. Let $f: \mathbb{R}^d \times
\mathbb{R}\to \mathbb{R}$ be a bounded continuous function with support contained in $\Lambda \times \mathbb{R}$ for some
$\Lambda \in \mathcal{B}_{\rm b}(\mathbb{R}^d)$.
The topology in question is the weakest
one  which makes the maps
\begin{equation}
  \label{M}
\widehat{\Gamma} \ni \hat{\gamma} \mapsto \sum_{x \in \hat{p}(\hat{\gamma})} f(x, \sigma_x) \in \mathbb{R}
\end{equation}
continuous for all possible $f: \mathbb{R}^d \times
\mathbb{R}\to \mathbb{R}$ as above. This topology is completely and separably metrizable, see \cite[Section 2]{CG}, and thus
$\widehat{\Gamma}$ is a Polish space. Let $\mathcal{B}(\widehat{\Gamma})$ be the corresponding Borel $\sigma$-field.
From the definition of the topologies of $\Gamma$ and $\widehat{\Gamma}$, it follows that the projection defined in (\ref{P1}) is continuous, and hence, for each $\gamma \in \Gamma$,
we have that
\begin{equation*}
  \hat{p}^{-1}(\{\gamma\})=: \mathbb{R}^\gamma \in  \mathcal{B}(\widehat{\Gamma}).
\end{equation*}
For each fixed $\gamma \in \Gamma$,  $\mathbb{R}^\gamma$ is a Polish space embedded into $\widehat{\Gamma}$, which is a Polish space as well.
By the Kuratowski theorem \cite[page 21]{Pa},
the  latter implies that the Borel $\sigma$-fields  $\mathcal{B}(\mathbb{R}^\gamma)$ and
\begin{equation*}
\mathcal{A}( \mathbb{R}^\gamma) := \{ A \in  \mathcal{B}(\widehat{\Gamma}): A \subset \mathbb{R}^\gamma\}
\end{equation*}
are measurably isomorphic. Thus, any probability measure $\mu$ on $\mathcal{B}(\widehat{\Gamma})$ with the property $\mu(\mathbb{R}^\gamma)=1$
can be redefined as a measure on $\mathcal{B}(\mathbb{R}^\gamma)$.

Let $\mathcal{P}(\widehat{\Gamma})$ be the set of all probability measures on $(\widehat{\Gamma}, \mathcal{B}( \widehat{\Gamma}))$.
We equip it with the topology defined as follows. For a fixed $\Lambda \in \mathcal{B}_{\rm b}(\mathbb{R}^d)$, let $\mathcal{B}_\Lambda (\widehat{\Gamma})$ be the smallest $\sigma$-subfield of $\mathcal{B} (\widehat{\Gamma})$ such that the maps as in (\ref{M})
are $\mathcal{B}_\Lambda (\widehat{\Gamma})$-measurable for all bounded measurable functions $f: \mathbb{R}^d \times \mathbb{R} \to \mathbb{R}$
with support contained in $\Lambda \times \mathbb{R}$. Then we set, cf. (\ref{loc}),
\[
\mathcal{B}_{\rm loc}(\widehat{\Gamma}) = \bigcup_{ \Lambda \in \mathcal{B}_{\rm b}(\mathbb{R}^d)} \mathcal{B}_{\Lambda} (\widehat{\Gamma}).
\]
Now the $\mathfrak{L}$-topology on $\mathcal{P}(\widehat{\Gamma})$ is defined as in Definition \ref{Gedf} by using
$\mathcal{B}_{\rm loc}(\widehat{\Gamma})$ as the algebra of local sets. A map $\widehat{\Gamma} \ni \hat{\gamma} \mapsto \varphi(\hat{\gamma})\in \mathbb{R}$ is called {\it local} if it is $\mathcal{B}_\Lambda (\widehat{\Gamma})$-measurable for some $\Lambda \in \mathcal{B}_{\rm b}(\mathbb{R}^d)$. Local maps
$\Gamma \ni \gamma \mapsto \varphi(\gamma)\in \mathbb{R}$ are defined in the same way.

Let $\bar{\sigma}$ in (\ref{5}) be fixed in such a way that $\sigma_x = s$ for some $s\in \mathbb{R}$ and all $x\in \gamma$. Then, for each $\gamma \in \Gamma$ and $\Lambda \in \mathcal{B}_{\rm b} (\mathbb{R}^d)$ and for such $\bar{\sigma}$, we can define a probability measure on $(\widehat{\Gamma}, \mathcal{B}( \widehat{\Gamma}))$ by setting
\begin{equation}
  \label{P4}
\widehat{\Pi}_\Lambda^s(A|\gamma) = \Pi_\Lambda^\gamma (A \cap\mathbb{R}^\gamma|\bar{\sigma}),
\end{equation}
where $\Pi_\Lambda^\gamma (\cdot|\bar{\sigma})$ is given in (\ref{5}). By the very construction, the map
$\Gamma \ni \gamma \mapsto \widehat{\Pi}_\Lambda^s (A|\gamma)\in \mathbb{R}$ is measurable for each $A\in \mathcal{B}(\widehat{\Gamma})$. Thus,
we can consider
\begin{equation}
 \label{eta}
\hat{\eta}_\Lambda^s (\cdot):= \int_{\Gamma} \widehat{\Pi}_\Lambda^s (\cdot|\gamma)\pi_\lambda (d \gamma)
\end{equation}
which is an element of $\mathcal{P}(\widehat{\Gamma})$, equipped with the $\mathfrak{L}$-topology
defined above. It can be shown, see \cite[Corollary 4.2]{DKKP}, that for our model the
family $\{\hat{\eta}^s_\Lambda\}_{\Lambda \in \mathcal{B}_{\rm b}(\mathbb{R}^d)}$
is relatively compact in $\mathcal{P}(\widehat{\Gamma})$ for each $s\in \mathbb{R}$. Let
$\hat{\eta}^s$ be its accumulation point and $\{\Lambda_n\}_{n\in \mathbb{N}}$ be the cofinal sequence such that
$\hat{\eta}^s_{\Lambda_n} \to \hat{\eta}^s$ as $n \to +\infty$. By  $\hat{\eta}_\Gamma^s$ we denote the projection of
$\hat{\eta}^s$ on $\Gamma$. For a bounded local function $f:\Gamma \to \mathbb{R}$,
$\hat{f}:=f\circ \hat{p}:\widehat{\Gamma} \to \mathbb{R}$ is then also local, and hence
\[
\int_\Gamma f d \pi_\lambda = \int_{\widehat{\Gamma}} \hat{f} d \hat{\eta}^s_{\Lambda_n} \to
\int_{\widehat{\Gamma}} \hat{f} d \hat{\eta}^s = \int_\Gamma f d \hat{\eta}^s_\Gamma, \quad n \to +\infty .
\]
Thus, $\hat{\eta}^s_\Gamma = \pi_\lambda$, and we can disintegrate, cf. (\ref{eta}), and obtain
\begin{equation}
\label{eta1}
\hat{\eta}^s (A) = \int_{\Gamma}\eta^s (A|\gamma) \pi_\lambda (d \gamma), \quad A\in \mathcal{B}(\widehat{\Gamma}),
\end{equation}
where $\eta^s$ is a regular conditional measure such that $\eta^s (A|\gamma) = \eta^s (A\cap \mathbb{R}^\gamma|\gamma)$
for almost all $\gamma$. As in (\ref{P4}), we then redefine
$\eta^s(\cdot|\gamma)$ as a measure on $\mathbb{R}^\gamma$, for which we keep the same notation.  One can prove \cite{DKKP}
that, for almost all $\gamma$, $\eta^s(\cdot|\gamma) \in \mathcal{G}_{\rm t}(\beta|\gamma)$.
Thus,  $\eta^s(\cdot|\gamma)$ is a {\it random Gibbs measure}.

In principle, we could construct such Gibbs measures
without the study performed in Section \ref{sec:2}, just by showing that the
family $\{\hat{\eta}^s_\Lambda\}_{\Lambda \in \mathcal{B}_{\rm b}(\mathbb{R}^d)}$
is relatively compact in $\mathcal{P}(\widehat{\Gamma})$.   However, this way has the following drawbacks: (a) in contrast to those in (\ref{mu-a}), the measures
$\eta^s(\cdot|\gamma)$ need not be limiting and hence cannot approximate thermodynamic states of large finite systems; (b)
there is no control on the sets of $\gamma$, as well as on their dependence on $\lambda$ and $s$, for which $\eta^s(\cdot|\gamma)$ exist, cf. Proposition \ref{1pn}; (c) nothing can be said of the integrability properties of
$\eta^s(\cdot|\gamma)$, cf. (\ref{15}); (d) it is unclear whether we can have $\eta^{+a} \neq \eta^{-a}$. These problems are partly resolved in
the statement below. Recall that each
$\mu^{+a}$ is
a measure on $\mathbb{R}^\gamma$, $\gamma\in A_1$, see Proposition \ref{1pn}, and is attained along a cofinal sequence $\mathcal{L}:= \{\Lambda_n\}_{n\in \mathbb{N}}$, that will be indicated as $\mu^{+a}_{\mathcal{L}}$. For each $\gamma\in A_1$, such measures $\mu^{+a}$ constitute the set of accumulation points of the family
$\{\Pi^\gamma_\Lambda (\cdot|\bar{\sigma}): \Lambda \in \mathcal{B}_{\rm b}
(\mathbb{R}^d)\}$. The meaning of the theorem below is that, for a full $\pi_\lambda$-measure subset $A_1' \subset A_1$, there exists a (measurable) selection $\{\mu^{+a}_{\mathcal{L}(\gamma)}\}_{\gamma \in A_1'}$ such that $\mu^{+a}_{\mathcal{L}(\gamma)} = \eta^{+a}(\cdot|\gamma)$ for all $\gamma \in A_1'$.
 \begin{theorem}
 For arbitrary positive $a$ and $\lambda$, there exists $A_1' \subset A_1$ such that: (i) $\pi_\lambda (A_1') =1$; (ii) for each $\gamma \in A_1'$,
 there exists a cofinal sequence $\mathcal{L}(\gamma)$ such that $\mu^{+a}_{\mathcal{L}(\gamma)} = \eta^{+a}(\cdot|\gamma)$. Therefore, $\eta^{+a}(\cdot |\gamma) \neq \eta^{-a}(\cdot |\gamma)$, and hence quenched random Gibbs measures
 are multiple whenever the conditions
of Theorem \ref{2tm} are satisfied.
 \label{etatm}
\end{theorem}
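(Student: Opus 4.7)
The plan is to realize the random Gibbs measure $\eta^{+a}(\cdot|\gamma)$ as a measurably-chosen quenched limit $\mu^{+a}_{\mathcal{L}(\gamma)}$ for $\pi_\lambda$-a.e.\ $\gamma$, and then to deduce $\eta^{+a}(\cdot|\gamma) \neq \eta^{-a}(\cdot|\gamma)$ directly from Theorem \ref{2tm} together with the $\sigma \mapsto -\sigma$ symmetry of $\chi$ and of (\ref{C2}). All ingredients are already in place: the integrated $\mathfrak{L}$-convergence $\hat{\eta}^{+a}_{\Lambda_n} \to \hat{\eta}^{+a}$ coming from (\ref{eta})--(\ref{eta1}); the quenched $\mathfrak{L}$-compactness of $\{\Pi^{\gamma}_{\Lambda}(\cdot|\bar{\sigma}^{+a})\}_{\Lambda}$ for every $\gamma \in A_1$, noted after Theorem \ref{1tm}; and the fact, attributed to \cite{DKKP}, that $\eta^{+a}(\cdot|\gamma) \in \mathcal{G}_{\rm t}(\beta|\gamma)$ for $\pi_\lambda$-a.e.\ $\gamma$.

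I would first fix a countable separating family $\{F_k\}_{k\in\mathbb{N}}$ of bounded $\mathcal{B}_{\rm loc}(\widehat{\Gamma})$-measurable functions (possible since $\mathcal{B}_{\rm loc}(\widehat{\Gamma})$ is countably generated), so that the $\mathfrak{L}$-topology is metrizable on compacta. Setting $\phi^{(k)}_n(\gamma) := \int F_k\, d\widehat{\Pi}^{+a}_{\Lambda_n}(\cdot|\gamma)$ and $\phi^{(k)}(\gamma) := \int F_k\, d\eta^{+a}(\cdot|\gamma)$, the disintegrations (\ref{eta}) and (\ref{eta1}) recast the assumed $\mathfrak{L}$-convergence as $\int_\Gamma \phi^{(k)}_n\, d\pi_\lambda \to \int_\Gamma \phi^{(k)}\, d\pi_\lambda$ for every $k$. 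The central claim is that $\eta^{+a}(\cdot|\gamma)$ is an $\mathfrak{L}$-accumulation point of $\{\widehat{\Pi}^{+a}_{\Lambda_n}(\cdot|\gamma)\}_n$ for $\gamma$ in some $A_1' \subset A_1$ with $\pi_\lambda(A_1') = 1$. I would establish this by showing that the closed-valued multifunction $\gamma \mapsto C(\gamma) \subset \mathcal{P}(\widehat{\Gamma})$ of $\mathfrak{L}$-accumulation points is measurable in the Effros sense (using the measurability of $\gamma \mapsto \widehat{\Pi}^{+a}_{\Lambda_n}(\cdot|\gamma)$), and that a putative failure $\eta^{+a}(\cdot|\gamma) \notin C(\gamma)$ on a set of positive $\pi_\lambda$-measure would, via a weakly-$L^2(\pi_\lambda)$ convergent subsequence of the uniformly bounded $\phi^{(k)}_n$ whose only admissible weak limit is $\phi^{(k)}$, contradict the integrated convergence. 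Once this inclusion is secured, the Kuratowski--Ryll-Nardzewski selection theorem provides a measurable $\gamma \mapsto \mathcal{L}(\gamma) \subset \{\Lambda_n\}_{n\in\mathbb{N}}$ such that $\mu^{+a}_{\mathcal{L}(\gamma)} = \eta^{+a}(\cdot|\gamma)$, and the Kuratowski isomorphism between $\mathcal{B}(\mathbb{R}^\gamma)$ and $\mathcal{A}(\mathbb{R}^\gamma)$ transfers this equality from $\mathcal{P}(\widehat{\Gamma})$ to $\mathcal{P}(\mathbb{R}^\gamma)$.

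With the identification in hand, the multiplicity follows at once from Theorem \ref{2tm} and Lemma \ref{1lm}: for $\pi_\lambda$-a.e.\ $\gamma$ there is a vertex $o \in \gamma$ in the infinite cluster of $(\gamma, \varepsilon_\gamma)$ with $\int \sigma_o\, d\mu^{+a}_{\mathcal{L}}(\sigma) > 0$ for \emph{every} cofinal $\mathcal{L}$, while the spin-flip symmetry gives the opposite sign for $\mu^{-a}_{\mathcal{L}}$; consequently $\eta^{+a}(\cdot|\gamma) \neq \eta^{-a}(\cdot|\gamma)$ on $A_1'$. The main obstacle I anticipate is precisely the pointwise inclusion $\eta^{+a}(\cdot|\gamma) \in C(\gamma)$ $\pi_\lambda$-almost surely, since integrated $\mathfrak{L}$-convergence of the annealed measures does not by itself yield pointwise subsequential convergence of the fibres; the remedy is the weak-compactness/Koml\'{o}s-type extraction sketched above, combined with the measurability of the disintegration so as to ensure a joint measurable selection rather than merely a $\gamma$-by-$\gamma$ one.
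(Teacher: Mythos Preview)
Your broad strategy coincides with the paper's: a countable separating family of local test functions, together with a Koml\'os extraction, is used to upgrade the annealed $\mathfrak{L}$-convergence $\hat{\eta}^{+a}_{\Lambda_n}\to\hat{\eta}^{+a}$ to $\pi_\lambda$-a.s.\ identification of $\eta^{+a}(\cdot|\gamma)$ with a quenched limit; the multiplicity conclusion is then drawn exactly as you indicate. Two points of comparison are worth recording.

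First, the paper does not take the separating family for granted. Because the fibres $\mathbb{R}^\gamma$ vary with $\gamma$, one needs functions on $\widehat{\Gamma}$ that are simultaneously local and separate points; the paper manufactures these by fixing a measurable enumeration $\epsilon$ of $\hat{\gamma}$ (cf.\ Kallenberg) and using characters $f_{\Delta,p,q}(\hat{\gamma})=\exp\big(i\sum_j[p_{k_j}\!\cdot x_{k_j}+q_{k_j}\sigma_{x_{k_j}}]\big)$ indexed by $\Delta\in\mathcal{D}$, $p\in(\mathbb{Q}^d)^{\mathbb{N}}$, $q\in\mathbb{Q}^{\mathbb{N}}$. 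This construction, and the verification that $\mathcal{F}$ is a separating class for $\mathcal{P}(\widehat{\Gamma})$, replaces your appeal to abstract countable generation. Second, the paper does not invoke Effros measurability or the Kuratowski--Ryll-Nardzewski selector at all: measurability of $\gamma\mapsto\eta^{+a}(\cdot|\gamma)$ is already built into the disintegration (\ref{eta1}), so once the pointwise identification $\eta^{+a}(\cdot|\gamma)=\mu^{+a}_{\mathcal{L}(\gamma)}$ is established there is nothing further to select.

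There is, however, a genuine gap in your central step. You argue that if $\eta^{+a}(\cdot|\gamma)\notin C(\gamma)$ on a set of positive measure, a weak-$L^2$ convergent subsequence of $\phi^{(k)}_n$ ``whose only admissible weak limit is $\phi^{(k)}$'' yields a contradiction. But weak-$L^2$ convergence $\phi^{(k)}_n\rightharpoonup\phi^{(k)}$ carries no pointwise information whatsoever; it is perfectly compatible with $\phi^{(k)}(\gamma)$ lying outside the accumulation set of $\{\phi^{(k)}_n(\gamma)\}_n$ for every $\gamma$. Koml\'os is indeed the remedy, but it delivers only Ces\`aro convergence: for a.e.\ $\gamma$ one gets $\tfrac{1}{M}\sum_{m\le M}\Pi^\gamma_{\Lambda^m}(\cdot|\bar\sigma)\to\eta^{+a}(\cdot|\gamma)$, which places $\eta^{+a}(\cdot|\gamma)$ in the closed convex hull of $C(\gamma)$, not in $C(\gamma)$ itself. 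The paper closes this gap by exploiting the \emph{strong} form of Koml\'os' theorem (every further subsequence of the Koml\'os subsequence has the same a.s.\ Ces\`aro limit): for each $\gamma\in A_1$ one first extracts, by quenched compactness, a $\gamma$-dependent subsequence $\mathcal{L}(\gamma)\subset\{\Lambda^m\}$ along which $\Pi^\gamma_{\Lambda^{m_l}(\gamma)}\to\mu^{+a}_{\mathcal{L}(\gamma)}$, and then observes that the Ces\`aro averages along this subsequence converge to the same $\mu^{+a}_{\mathcal{L}(\gamma)}$, which must therefore agree with the Koml\'os limit $h_{\Delta,q}(\gamma)$ on every test function in $\mathcal{F}$. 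Your sketch should be amended to follow this route rather than the weak-$L^2$ contradiction.
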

\begin{proof}
From now on we fix $s = +a  $ and $\{\Lambda_n\}_{n\in \mathbb{N}}$, and hence $\hat{\eta}^s$.
Note that we cannot expect that the assumed convergence $\hat{\eta}^s_{\Lambda_n} \to \hat{\eta}^s$ does imply
that $\Pi^\gamma_{\Lambda_n} (\cdot|\bar{\sigma}) \to \eta^s (\cdot|\gamma)$, which would yield the proof.

Let $\{\Xi_n\}_{n\in \mathbb{N}}\subset \mathcal{B}_{\rm b}(\mathbb{R}^d)$ be a partition of
$\mathbb{R}^d$. Fix dense subsets $\{y^n_k\}_{k\in \mathbb{N}} \subset \Xi_n$. As in the proof of Lemma 2.3 in \cite[page 20]{Kall}, by means of this partition we introduce a linear order of the elements
of each $\gamma\in \Gamma$. If $x, y\in \gamma$ belong to distinct $\Xi_n$, we set $x < y$ whenever $x\in \Xi_n$, $y\in \Xi_m$, and $n<m$.  If both $x$ and $y$ lie in the same $\Xi_n$, let $k$ be the smallest integer such that $|x-y^n_k | \neq |y-y^n_k |$. Then we set
$x<y$ if $|x-y^n_k | <|y-y^n_k |$. Next, we enumerate the elements of $\gamma$ in accordance with the order in such a way that $x_1 < x_2 < \cdots < x_k < \cdots$. This defines an
{\it enumeration} on $\widehat{\Gamma}$, that is,  the map
\begin{equation}
  \label{et}
\hat{\gamma} \mapsto \epsilon(\hat{\gamma}) = \{ (x_1 , \sigma_{x_1}), \dots , (x_k , \sigma_{x_k}), \dots \}
\end{equation}
such that all $(x_k, \sigma_{x_k})$ are distinct and $(x_k, \sigma_{x_k}) \in \hat{\gamma}$. This map is
{\it measurable} in the sense that $\{ \hat{\gamma} : x_k \in \Delta, \ \sigma_{x_k} \in \Sigma\} \in \mathcal{B}(\widehat{\Gamma})$ for
each $k\in \mathbb{N}$, $\Delta \in \mathcal{B}_{\rm b}(\mathbb{R}^d)$ and $\Sigma \in \mathcal{B}(\mathbb{R})$.
The latter fact can be proven by a slight generalization of the proof of Lemma 2.3 in \cite[page 20]{Kall}.
Then, for a fixed $\Delta \in \mathcal{B}_{\rm b}(\mathbb{R}^d)$
and any $k\in \mathbb{N}$,
 $p \in (\mathbb{Q}^d)^{\mathbb{N}}$, and $q\in\mathbb{Q}^{\mathbb{N}}$, the real and imaginary parts of the function
\[
 \hat{\gamma} \mapsto \exp[i (p_k\cdot x_k + q_k\sigma_{x_k})]\in \mathbb{C}
\]
are as that in (\ref{M}), and hence are $\mathcal{B}_\Delta (\widehat{\Gamma})$- measurable. Here $i = \sqrt{-1}$ and $p\cdot x $ stands for the scalar product in $\mathbb{R}^d$.
 Let $\mathcal{D}=\{\Delta_n\}_{n\in \mathbb{N}}\subset \mathcal{B}_{\rm b}(\mathbb{R}^d)$ be a cofinal sequence.
For $\epsilon$ as in (\ref{et}), $\Delta \in \mathcal{D}$, $p \in (\mathbb{Q}^d)^{\mathbb{N}}$, and $q\in\mathbb{Q}^{\mathbb{N}}$, we set
\begin{equation}
 \label{eta4}
 f_{\Delta, p, q} (\hat{\gamma}) = \exp\bigg{(} i \sum_{j=1}^{|\gamma_\Delta|} \left[ p_{k_j}\cdot x_{k_j} + q_{k_j} \sigma_{x_{k_j}} \right] \bigg{)},
\end{equation}
where $\gamma = \hat{p}(\hat{\gamma})$, $(x_k, \sigma_{x_k})$ is the $k$-th element of the sequence $\epsilon(\hat{\gamma})$, and
the sum runs over the set of all those $j$ for which $x_{k_j} \in \Delta$. By construction, each such $f$ is $\mathcal{B}_\Delta(\widehat{\Gamma})$-measurable.
Let $\mathcal{F}$ be the (countable) family of all such functions. It has the following properties: (a) is closed under point-wise multiplication;
(b) separates points of $\widehat{\Gamma}$. The latter means that, for any two distinct
$\hat{\gamma}, \hat{\gamma}' \in \widehat{\Gamma}$, one finds $f\in \mathcal{F}$ such that $f(\hat{\gamma}) \neq f(\hat{\gamma}')$.
By  Fernique's theorem \cite[page 6]{Va}, property (b) implies that the smallest $\sigma$-field of subsets of $\widehat{\Gamma}$ is $\mathcal{B}(\widehat{\Gamma})$. Combining this with property (a) we then obtain, see  \cite[page 149]{Bogach} or the proof of Theorem 1.3.26 in \cite[page 113]{mon},
that $\mathcal{F}$ is a separating class for $\mathcal{P}(\widehat{\Gamma})$.
That is, $\mu,\nu \in \mathcal{P}(\widehat{\Gamma})$ coincide if and only if
\[
 \int f d \mu = \int f d \nu, \qquad {\rm for} \ {\rm all} \ f \in \mathcal{F}.
\]
For a fixed triple $\Delta, p, q$, by the assumed convergence $\hat{\eta}^s_{\Lambda_n} \to \hat{\eta}^s$ we have that
\begin{equation}
 \label{eta5}
 \int_{\widehat{\Gamma}} f_{\Delta, p, q} (\hat{\gamma} ) \hat{\eta}^s_{\Lambda_n} ( d\hat{\gamma})= \int_\Gamma g^{(n)}_{\Delta, p, q} (\gamma) \pi_\lambda ( d \gamma) \to \int_{\widehat{\Gamma}} f_{\Delta, p, q} (\hat{\gamma} )
 \hat{\eta}^s ( d\hat{\gamma}),
\end{equation}
as $n \to +\infty$.  Here, cf. (\ref{eta1}) and (\ref{eta4}),
\begin{eqnarray}
\label{eta6}
\qquad g^{(n)}_{\Delta, p, q} (\gamma)& := & \int_{\widehat{\Gamma}} f_{\Delta, p, q} (\hat{\gamma} ) \widehat{\Pi}^{s}_{\Lambda_n} ( d \hat{\gamma} |\gamma) = \exp\bigg{(} i \sum_{j=1}^{|\gamma_\Delta|}  p_{k_j}\cdot x_{k_j} \bigg{)} h^{(n)}_{\Delta, q} (\gamma),\\[.2cm]
h^{(n)}_{\Delta, q} (\gamma) & := &  \int_{\mathbb{R}^\gamma} \exp\bigg{(} i \sum_{x \in \gamma_\Delta}
q_{k(x) } \sigma_x \bigg{)}\Pi^\gamma_{\Lambda_n} (d \sigma|\bar{\sigma}),\nonumber
\end{eqnarray}
where $k(x)$ is the number of $x \in \gamma= \hat{p}(\hat{\gamma})$ defined by the enumeration (\ref{et}).
Obviously, $|h^{(n)}_{\Delta,q} (\gamma)|\leq 1$ for all $n\in \mathbb{N}$, and $\{ h^{(n)}_{\Delta,  q} \}_{n\in \mathbb{N}} \subset
L^1 (\Gamma, d \pi_\lambda)$. By Koml\'os' theorem (see, e.g., \cite{KKP1}), there exists a subsequence $\{ h^{(n_l)}_{\Delta,  q} \}_{l\in \mathbb{N}} \subset \{ h^{(n)}_{\Delta,  q} \}_{n\in \mathbb{N}}$ such that, for each further subsequence
$\{ h^{(n_{l_m})}_{\Delta,  q} \}_{m\in \mathbb{N}} \subset \{ h^{(n_l)}_{\Delta,  q} \}_{l\in \mathbb{N}}$, one has
\begin{equation}
  \label{eta8}
\frac{1}{M} \sum_{m=1}^M  h^{(n_{l_m})}_{\Delta,  q} (\gamma) \to h_{\Delta,  q} (\gamma), \qquad {\rm for} \ \pi_\lambda-{\rm a.a.} \ \gamma\in \Gamma,
\end{equation}
 where $h_{\Delta,  q}$ is a certain element of $ L^1 (\Gamma, d \pi_\lambda)$. Note that the subsequence $\{n_l\}_{l\in \mathbb{N}}$ depends on the choice of $\Delta,  q$. However, by the diagonal procedure as in \cite{KKP1} one can pick $\{n_{l_m}\}_{m\in \mathbb{N}} \subset \{n_l\}_{l\in \mathbb{N}}$ such that (\ref{eta8}) holds for all $\Delta \in \mathcal{D}$,
and $q\in\mathbb{Q}^{\mathbb{N}}$.
Then by (\ref{eta6}) and (\ref{eta8}) we get
\begin{equation}
  \label{eta9}
\int_{\mathbb{R}^\gamma} \exp\bigg{(} i \sum_{x \in \gamma_\Delta}
q_{k(x) } \sigma_x \bigg{)} P^\gamma_{M} (d \sigma|\bar{\sigma}) \to h_{\Delta,  q} (\gamma), \quad {\rm for} \ \pi_\lambda-{\rm a.a.} \ \gamma\in \Gamma ,
\end{equation}
where $\Lambda^m := \Lambda_{n_{l_m}}$ and
\begin{equation}
\label{et2}
P^\gamma_{M} (d \sigma|\bar{\sigma}) := \frac{1}{M} \sum_{m=1}^M \Pi^\gamma_{\Lambda^m} (d \sigma| \bar{\sigma}), \quad \ \  \bar{\sigma}_x = s = +a, \quad \ x \in \gamma .
\end{equation}
By Proposition \ref{1pn}, for $\gamma \in A_1$ the sequence
$\{\Pi^\gamma_{\Lambda^m} (\cdot| \bar{\sigma})\}_{m\in \mathbb{N}}$ is relatively compact in the $\mathfrak{L}$-topology. Thus,  one can pick  $\mathcal{L}(\gamma)=\{\Lambda^{m_l}(\gamma)\}_{l\in \mathbb{N}} \subset \{\Lambda^m\}_{m\in \mathbb{N}}$, for which, cf. (\ref{mu-a}),
\begin{equation}
\label{et1}
\Pi^\gamma_{\Lambda^{m_l}(\gamma)} (\cdot| \bar{\sigma}) \to \mu^{+a}_{\mathcal{L}(\gamma)}\in \mathcal{G}_{\rm t}(\beta|\gamma), \quad l\to +\infty.
\end{equation}
Note that the dependence of the sequence $\mathcal{L}(\gamma)$ on $\gamma$ can be very irregular
in view of the so called {\it chaotic size dependence}, see \cite{NS} and the discussion in \cite{KKP1}.
Now let $A_1'\subset A_1$ be such that also (\ref{eta9}) holds for all $\Delta$ and $q$. Then, for a fixed $\gamma \in A_1'$,  by (\ref{et2}) and (\ref{et1}) there exists the subsequence $\{ P^\gamma_{M_l} \}_{l\in \mathbb{N}} \subset \{ P^\gamma_{M} \}_{M\in \mathbb{N}}$ such that  $P^\gamma_{M_l} \to \mu^{+a}_{\mathcal{L}(\tau)}$, which yields
\[
h_{\Delta,  q} (\gamma) = \int_{\mathbb{R}^\gamma} \exp\bigg{(} i \sum_{x \in \gamma_\Delta}
q_{k(x) } \sigma_x \bigg{)} \mu^{+a}_{\mathcal{L}(\gamma)}(d \sigma), \quad \gamma \in A_1' .
\]
Since $h_{\Delta,  q} \in L^1(\Gamma ,d \pi_\lambda)$, we can integrate and by (\ref{eta1}), (\ref{eta5}), (\ref{eta6}), and (\ref{eta8}) obtain
\begin{eqnarray*}
& & \int_{\Gamma} \exp\bigg{(} i \sum_{j=1}^{|\gamma_\Delta|}  p_{k_j}\cdot x_{k_j} \bigg{)}  \left[ \int_{\mathbb{R}^\gamma} \exp\bigg{(} i \sum_{x \in \gamma_\Delta}
q_{k(x) } \sigma_x \bigg{)} \eta^{+a}(d \sigma|\gamma)\right] \pi_\lambda (d \gamma)  \\[.2cm]
& & =  \int_{\Gamma} \exp\bigg{(} i \sum_{j=1}^{|\gamma_\Delta|}  p_{k_j}\cdot x_{k_j} \bigg{)}  \left[ \int_{\mathbb{R}^\gamma} \exp\bigg{(} i \sum_{x \in \gamma_\Delta}
q_{k(x) } \sigma_x \bigg{)} \mu^{+a}_{\mathcal{L}(\gamma)}(d \sigma)\right] \pi_\lambda (d \gamma),
\end{eqnarray*}
which holds for all $\Delta \in \mathcal{D}$, $p \in (\mathbb{Q}^d)^{\mathbb{N}}$, and $q\in\mathbb{Q}^{\mathbb{N}}$.
This yields the proof since the integrand functions  constitute  separating classes.
\end{proof}

\section{Appendix: Proof of Proposition \ref{Wpn}}
\label{Asec}

For the general choice of $\chi$, let $\Pi^{\gamma,+a}_\Lambda$ be
defined as in (\ref{5}) with $\bar{\sigma}_x = +a$ for all $x\in
\gamma$. Each $\mu^{+a}$ is the weak limit of
$\{\Pi_{\Lambda_n}^{\gamma,+a}\}_{n\in \mathbb{N}}$ for some cofinal
sequence $\{\Lambda_n\}_{n\in \mathbb{N}}$. In the case of unbounded
spins, this convergence alone does not yet imply the convergence of
the moments of $\Pi_{\Lambda_n}^{\gamma,+a}$ to that on the
left-hand side of (\ref{22}). Then we use the uniform in $n$ bound
as in (\ref{15}), which can also be proven for all
$\Pi_\Lambda^{\gamma,+a}$, and obtain
\begin{equation*}
\int_{\mathbb{R}^\gamma} \sigma_x \Pi_{\Lambda_n}^{\gamma,+a} (d
\sigma) \to  \int_{\mathbb{R}^\gamma} \sigma_x \mu^{+a} (d\sigma),
\qquad n\to +\infty.
\end{equation*}
Since the sequence $\{\Lambda_n\}_{n\in \mathbb{N}}$ is exhausting,
it contains a cofinal subsequence, $\{\Lambda_{n_k}\}_{k\in
\mathbb{N}}$, such that also
\[
\int_{\mathbb{R}^\gamma} \sigma_x \Pi^{\gamma,{\rm
Ising}}_{\Lambda_{n_k}} (d \sigma) \to  \int_{\mathbb{R}^\gamma}
\sigma_x \nu^{+} (d\sigma), \qquad n\to +\infty,
\]
where $\Pi^{\gamma,{\rm Ising}}_{\Lambda_{n_k}}$ is the kernel
(\ref{5}) corresponding to the Ising single-spin measure (\ref{16}),
interaction intensities $a^2 \phi(|x-y|)$, and the choice
$\bar{\sigma}_x = +1$ for all $x\in \gamma$. Thus, the validity of
(\ref{22}) will follow if we prove that, for each $\Lambda$ which
contains $x$, the following holds
\begin{equation}
  \label{24}
\int_{\mathbb{R}^\gamma} \sigma_x \Pi_{\Lambda}^{\gamma,+a} (d
\sigma)  \geq a \int_{\mathbb{R}^\gamma} \sigma_x \Pi^{\gamma,{\rm
Ising}}_{\Lambda} (d \sigma).
\end{equation}
Let $Z^\gamma_\Lambda (a)$ and $Z_\Lambda^{^\gamma,{\rm Ising}} (1)$
be the corresponding normalizing factors defined in (\ref{6}). Then
by (\ref{5}) we have, cf. (\ref{C2}),
\begin{eqnarray}
\label{25} & & \int_{\mathbb{R}^\gamma} \sigma_x
\Pi_{\Lambda}^{\gamma,+a} (d \sigma)  - a \int_{\mathbb{R}^\gamma}
\sigma_x \Pi^{\gamma,{\rm Ising}}_{\Lambda} (d \sigma) =
 \left( Z_\Lambda (a)Z_\Lambda^{^\gamma,{\rm Ising}} (1) \right)^{-1} \\[.2cm]
& & \quad \times \int_{\mathbb{R}^\gamma }\int_{\mathbb{R}^\gamma}
(\sigma_x - a \tilde{\sigma}_x)\exp\bigg{\{}
\beta\sum_{\{x,y\}\subset \gamma_\Lambda}\phi(|y-z|)
[\sigma_y \sigma_z + a^2\tilde{\sigma}_y \tilde{\sigma}_z] \nonumber \\[.2cm]
& & \qquad \qquad \qquad \qquad  + \sum_{y \in
\gamma_\Lambda}[\sigma_y + a\tilde{\sigma}_y ]K_y \bigg{\}}
\bigotimes_{x\in \gamma_\Lambda} \left(\chi( d \sigma_x)\otimes
\chi^{\rm Ising} (d \tilde{\sigma}_x)\right), \nonumber
\end{eqnarray}
where $\chi^{\rm Ising}$ is given in (\ref{16}) and $K_y = \beta a
\sum_{z\in \gamma_{\Lambda^c}: z \sim  y } \phi(|y-z|)\geq 0$.  Then
(\ref{24}) will follow from the positivity of the integral on the
right-hand side of (\ref{25}). Now we rewrite the integrand in
(\ref{25}) in the variables $u^{\pm}_x := (\sigma_x \pm a
\tilde{\sigma}_x)/\sqrt{2}$, and then expand the exponent and write
the integral as the sum of the products over $x\in \gamma_\Lambda$
of `one-site' integrals having the form
\begin{eqnarray}
  \label{26}
& &  C_x \int_{\mathbb{R}^2} (u^+_x)^{m_x} (u^-_x)^{n_x}  \chi( d
\sigma_x)\otimes \chi^{\rm Ising} (d \tilde{\sigma}_x)\\ & &  = C_x
\int_{\mathbb{R}}\left[(\sigma_x + a)^{m_x} (\sigma_x - a)^{n_x} +
(\sigma_x - a)^{m_x} (\sigma_x + a)^{n_x} \right]\chi( d\sigma_x),
\quad C_x \geq 0. \nonumber
\end{eqnarray}
Thus, to prove the statement we have to show that the integral on
the right-hand side of (\ref{26}) is nonnegative for all values of
$m_x, n_x \in \mathbb{N}_0$. By the assumed symmetry of $\chi$, this
integral vanishes if $m_x$ and $n_x$ are of different parity. If
both $m_x$ and $n_x$ are even, then the positivity is immediate.
Thus, it is left to consider the case where $m_x = 2 k+1$ and $n_x =
2l+1$. By the symmetry of $\chi$, it is enough to take $k\geq l$.
Thus, we have to prove the positivity of the following integral
\begin{eqnarray*}
& & \int_{\mathbb{R}}\left[(\sigma + a)^{2k+1} (\sigma- a)^{2l+1} + (\sigma - a)^{2k+1} (\sigma + a)^{2l +1} \right]\chi( d\sigma) \\
& & \quad = 2 \int_{0}^{+\infty}(\sigma^2 - a^2)^{2l+1}\left[
(\sigma + a)^{k-l} + (\sigma + a)^{k-l}\right] \chi(d \sigma).
\end{eqnarray*}
The function $\varphi(\sigma) := (\sigma + a)^{k-l} + (\sigma +
a)^{k-l}$ is increasing on $[ 0, +\infty)$. The integral on the
right-hand side of the latter equality can be written in the form
\begin{gather}
\label{27}
 \int_{0}^{+\infty}(\sigma^2 - a^2)^{2l+1} \varphi(\sigma) \chi( d\sigma) = I_1 (a) + I_2 (a) + I_3 (a),\\
 I_1 (a) := \int_{0}^a (\sigma^2 - a^2)^{2l+1} \varphi(\sigma) \chi( d\sigma) \geq - a^{4l +2}\varphi(a) \chi([0,a]), \nonumber \\
 I_2 (a) := \int_{a}^{a\sqrt{2}} (\sigma^2 - a^2)^{2l+1} \varphi(\sigma) \chi( d\sigma) \geq 0, \nonumber\\
 I_3 (a) := \int_{a\sqrt{2} }^{+\infty } (\sigma^2 - a^2)^{2l+1} \varphi(\sigma) \chi( d\sigma)  \geq a^{4l +2}\varphi(a\sqrt{2}) \chi([a\sqrt{2}, +\infty)) \nonumber
\end{gather}
In view of (\ref{21}), the sum on the right-hand side of (\ref{27})
is nonnegative, which completes the proof.

\paragraph{Acknowledgement}

This work was
financially supported by the DFG through SFB 701: ``Spektrale
Strukturen und Topologische Methoden in der Mathematik" and through
the research project 436 POL 125/0-1. The support from the ZiF
Research Group "Stochastic Dynamics: Mathematical Theory and
Applications" (Universit\"at Bielefeld) was also very helpful.
The authors benefited form discussions with Philippe Blanchard and
Stas Molchanov, for which they are cordially indebted. The authors thank also
both referees for constructive criticism and valuable suggestions.




\end{document}